\def\d{\mbox{d}}
\def\half{\hbox{$1\over2$}}
\def\d{{\rm d}}
\newtheorem{theorem}{Theorem}[section]
\newtheorem{lemma}{Lemma}
\numberwithin{equation}{section}
\begin{document}



\def\spacingset#1{\renewcommand{\baselinestretch}%
{#1}\small\normalsize} \spacingset{1}


\if0\blind
{
\title{\bf An Objective Prior from a Scoring Rule }
\author{Stephen G. Walker\footnote{Department of Mathematics,
University of Texas at Austin, USA. 
e-mail: s.g.walker@math.utexas.edu}\hspace{.2cm} \& \hspace{.2cm} Cristiano Villa\footnote{
School of Mathematics, Statistics \& Physics, University of Newcastle, UK. e-mail: Cristiano.Villa@ncl.ac.uk}
}
\date{}
\maketitle       
} \fi   

\if1\blind
{
  \bigskip
  \bigskip
  \bigskip
  \begin{center}
    {\LARGE\bf }
\end{center}
  \medskip
} \fi





\bigskip
\begin{abstract}
In this paper we introduce a novel objective prior distribution levering on the connections between information, divergence and scoring rules. In particular, we do so from the starting point of convex functions representing information in density functions. This provides a natural route to proper local scoring rules using Bregman divergence. Specifically, we determine the prior which solves setting the score function to be a constant. While in itself this provides motivation for an objective prior, the prior also minimizes a corresponding information criterion.

\end{abstract}

\noindent
{\it Keywords:} Bregman divergence; Convex function; Euler--Lagrange equation; Objective prior







\section{Introduction}\label{sc_introduction}
A major drawback of objective priors, such as Jeffreys prior \citep{Jeff1961} and the reference prior \citep{Bern:1979}, is that, in many cases, they are improper. While for a parameter that is defined over a bounded interval, such as $(0,1)$, it is generally possible to derive objective prior distributions that are proper, this is not the case for parameters on $(0,\infty)$ or $(-\infty,\infty)$. The literature provides many examples where improper prior distributions cannot be suitably employed; such as Bayes factors, mixture models and hierarchical models, to name but a few. Methods have been proposed to get around these obstacles, for example, Intrinsic Bayes Factors
\citep{BergPer:1996} and Fractional Bayes factors \citep{Ohag:1995} or reparametrising mixture models \citep{GrazianRobert2018}. However, these type of results are generally valid for a limited number of  specific conditions. Additonally, improper prior distributions are not too suitable to be employed where large numbers of parameters are involved as it would be difficult to establish properness of the full posterior distribution.

The idea of this paper is to present a novel objective prior distribution for continuous parameter spaces by considering the connection between  information, divergence and scoring rules.  In particular, the proposed prior can be defined over $(0,\infty)$ and $(-\infty,\infty)$, the latter by extending the former, and it has the appealing property of being proper. 

Recently, \cite{LVW:2020}, introduced a new class of objective prior which solved a differential equation of the form
$S(q,q',q'')=0$, where $S$ is a score function and the solution $q$ acts as the prior distribution. The solution is also shown to minimize an information criterion. 

There are two well known relations that connect information, proper local scoring rules and divergences. The most famous of which links Shannon information, Kullback--Leibler divergence and the log--score, given by
\begin{equation}\label{dis}
\int p\log \frac{p}{q}=\int p\log p +\int p\,(-\log q),
\end{equation}
where $p$ and $q$ are two densities,
and integrals will be generally defined with respect to the Lebesgue measure.
The term on the left-hand-side of \eqref{dis} is the Kullback--Leibler divergence \citep{Kull:1951} between $p$ and $q$, the first term on the right-hand-side is the Shannon information associated with density $p$, and the second term is the expectation of the log-score function.

Another way to connect information, divergence and proper local scoring rules, involves Fisher divergence, Fisher information, and the Hyv\"{a}rinen score function (\cite{Hyva:2005}):
$$\int p\left(\frac{p'}{p}-\frac{q'}{q}\right)^2=\int \frac{(p')^2}{p} +\int p\left(2\frac{q''}{q}-\left(\frac{q'}{q}\right)^2\right),$$
where the final term has been obtained using an integration by parts.
In general, these relationships can be expressed as
\begin{equation}\label{generalrelation}
D(p,q)=I(p)+\int p\,S(q),
\end{equation}
where $D$ denotes the divergence, $I$ the measure of information and $S$ the score.

Recently, in \cite{Parry:2012}, a new class of score function was introduced, where the starting point is the property of the score function, which is that
$$p=\arg\min_q\int p\,S(q),$$
for all densities $p$. In other words, a score is said to be proper if the above is minimised by the choice of $q=p$.
Let us consider the well known log-score, $S(q)=-\log q(x)$. Then, we have that it satisfies the above property, since for any density $p$ it is that
$\int p\,\log(p/q)\geq 0$,
with equality only when $q\equiv p$. As such, we have that the log-score is a proper score. Furthermore, a score is said to be local if it only depends on $q$ through the density value $q(x)$. See \cite{Parry:2012} and \cite{EhmGnei:2012}. It has to be noted that the log-score is the only proper score to be local.

If we consider the Hyv\"{a}rinen score function \citep{Hyva:2005}, which is given by
$$S(q,q',q'')=2\,\frac{q''}{q}-\left(\frac{q'}{q}\right)^2,$$
which we note depends on $(q,q',q'')$, i.e. $q$ and the first two derivatives, as such it is not local in the above sense. However, the locality condition can be weakened \citep{Parry:2012} by allowing the score to depend on a finite number $m$ of derivatives. Therefore, the Hyv\"{a}rinen score will be a order--2 proper local scoring rule.

More generally, if a proper score depends on $m$ derivatives, then it will be defined an \textit{order--m local scoring rule}.
The theory in support of this, is based on the fact that the minimizer of  $\int p\,S(q)$ is $p$, and this can be investigated using variational analysis.
The relevant Euler--Lagrange equation of order two being
\begin{equation}\label{euler}
S+q\frac{\partial S}{\partial q}-\frac{d}{dx}\,q\,\frac{\partial S}{\partial q'}+\frac{d^2}{dx^2}\,q\,\frac{\partial S}{\partial q''}=0.
\end{equation}
The corresponding general case of \eqref{euler} is given as equation (18) in \cite{Parry:2012}. 
Throughout this paper we will focus on the case $m=2$, since this is where we draw our prior from.  
The Appendix provides the expression for a general $m$.

In \cite{Parry:2012}, the solution to equation \eqref{euler}, is proposed using properties of differential operators and $1$--homogeneous functions. Recall that a $1$--homogeneous function $f$ is such that $f(x,\lambda q,\lambda q')=\lambda\,f(x,q,q')$ for any $\lambda>0$.  
In particular, the Hyv\"{a}rinen score arises with $f(x,q,q')=(q')^2/q$ and 
\begin{equation}\label{hyvar}
S(q)=-\frac{\partial f}{\partial q}+\frac{d}{dx}\frac{\partial f}{\partial q'}.
\end{equation}
Furthermore, \cite{Parry:2012} and \cite{EhmGnei:2012} characterize all local and proper scoring rules of order $m=2$. 
With this respect, as an additional interesting result, in the Appendix we present the characterization using measures of information and the Bregman divergence \citep{Breg:1967}. The benefits of the proposed approach are that complicated mathematical analysis is avoided and the derivation of the local rule is made explicit. 

Following \cite{Parry:2012} and \cite{EhmGnei:2012} and the novel derivation of their results using Bregman divergence, which is the focus of the Appendix, information, divergence and scores can be obtained as follows: For some convex function $\alpha:\mathbb{R}\to\mathbb{R}$, 

\begin{description}

\item 1. \textit{Divergence}: Given the result \eqref{cond}, we get
$$D(p,q)=\int p\,\alpha(p'/p)-\int p\,\frac{\partial \phi}{\partial q}-\int p'\,\frac{\partial \phi}{\partial q'},$$
where
$$\frac{\partial \phi}{\partial q}=\alpha(q'/q)-(q'/q)\,\alpha'(q'/q)\quad\mbox{and}\quad \frac{\partial \phi}{\partial q'}=\alpha'(q'/q).$$
Using integration by parts on the right most intergal, and assuming that $[p\,\cdot\,\partial\phi/\partial q']$ vanishes at the extremes of the integral,  
$$D(p,q)=\int p\,\alpha(p'/p)+\int p\left\{\frac{d}{dx}\alpha'(q'/q)-\alpha(q'/q)+(q'/q)\,\alpha'(q'/q)\right\}. $$

\item 2. \textit{Information}: This follows from the divergence, and from \eqref{divinfsc}, and is given by
$$I(p)=\int p\,\alpha(p'/p).$$

\item 3. \textit{Score}: Again, from the form of the divergence and \eqref{score}, this is given by
\begin{equation}\label{ascore}
S(q,q',q'')=\frac{d}{dx}\alpha'(q'/q)-\alpha(q'/q)+(q'/q)\,\alpha'(q'/q).
\end{equation}

\end{description}

\noindent
The score $S(q,q',q'')$ in \eqref{ascore} generalizes the Hyv\"{a}rinen score, which arises when $\alpha(u)=u^2$.

\vspace{0.2in}
\noindent
The paper is organised as follows. Section \ref{sc_application} introduces the proposed objective prior. Section \ref{sc_mixtures} includes  a thorough simulation study, and an application to mixture models that involves both simulated and real data.  In Section \ref{sc_hierarchical} we have discussed another critical scenario where improper priors may resolve in improper posteriors, that is assigning an objective prior to the variance parameter in a hierarchical model.
The supporting theory is presented in the Appendix. In \ref{sc_informationandbregman} we use Bregman divergence to obtain general forms for score functions and associated divergences and following on from this
in \ref{sc_divergences} we detail  how we use Bregman divergences to obtain a divergence between probability density functions using their first derivatives, and show how to obtain score functions from these divergences. \ref{sc_highorder} provides the general case using $m$ derivatives. Finally, in \ref{ParryApp} we make the connection with our derivations of scores and that of \cite{Parry:2012}.

\section{New Objective prior}\label{sc_application}

\cite{LVW:2020} proposed constructing objective prior distributions on parameter spaces by solving equations of the kind
$S(q)=0$. Specifically, they used a weighted mixture of the log-score and the Hyv\"{a}rinen score functions. Note that the sole use of the log-score function would result in the uniform prior, which is not appropriate in many cases and may yield improper posterior distributions. On the other hand, a weighted combination of the two score functions yields a differential equation given by
\begin{equation}\label{LVWdiffeq}
-w\log q(x)+\frac{q''(x)}{q(x)}-\half \left(\frac{q'(x)}{q(x)}\right)^2=0,
\end{equation}
where $q$ denotes the prior density and $w$ the weight balancing the two score functions.
 
Solutions to the differential equation \eqref{LVWdiffeq} can be found for different spaces, and constraints on the shape of $q$ can be considered; so to have a prior density with desirable behaviour, such as monotone, convex, log--concave and more.

We have already seen that the Hyv\"{a}rinen score arises with $\alpha(u)=u^2$; see \eqref{ascore}. 
An important property an objective  prior distribution may be required to have is a heavy tail. We will consider such on $(0,\infty)$. 
Mirroring the Hyv\"{a}rinen score, we adopt $\alpha(u)=u^{-2}$ with $u=-q'/q$, and $q$ a decreasing density on $(0,\infty)$. In this case, equation \eqref{ascore} becomes
$6\,u^\prime/u^4 -3/u^2,$
which, by setting to 0, becomes
$u^\prime = \half u^2.$
The solution is easily seen to be
$u(x) = -2/(a+x),$
for some constant $a$. In this case, the prior on the parameter space $(0,\infty)$ is
\begin{equation}\label{eq_prior2}
q(x) = \frac{a}{(a+x)^2}.
\end{equation}
Interestingly, the prior in \eqref{eq_prior2}, is a Lomax distribution \citep{Lomax:1954} with scale parameter $a$ and shape parameter equal to 1. Recalling that the Lomax distribution can be directly connected to the Pareto Type I and Pareto Type II distributions, its heavy-tailed nature is immediately obvious. 

Fig.~\ref{fig:prior2} shows the prior with $a=1$. 
\begin{figure}[h!]
\centering
\includegraphics[width=14cm,height=6cm]{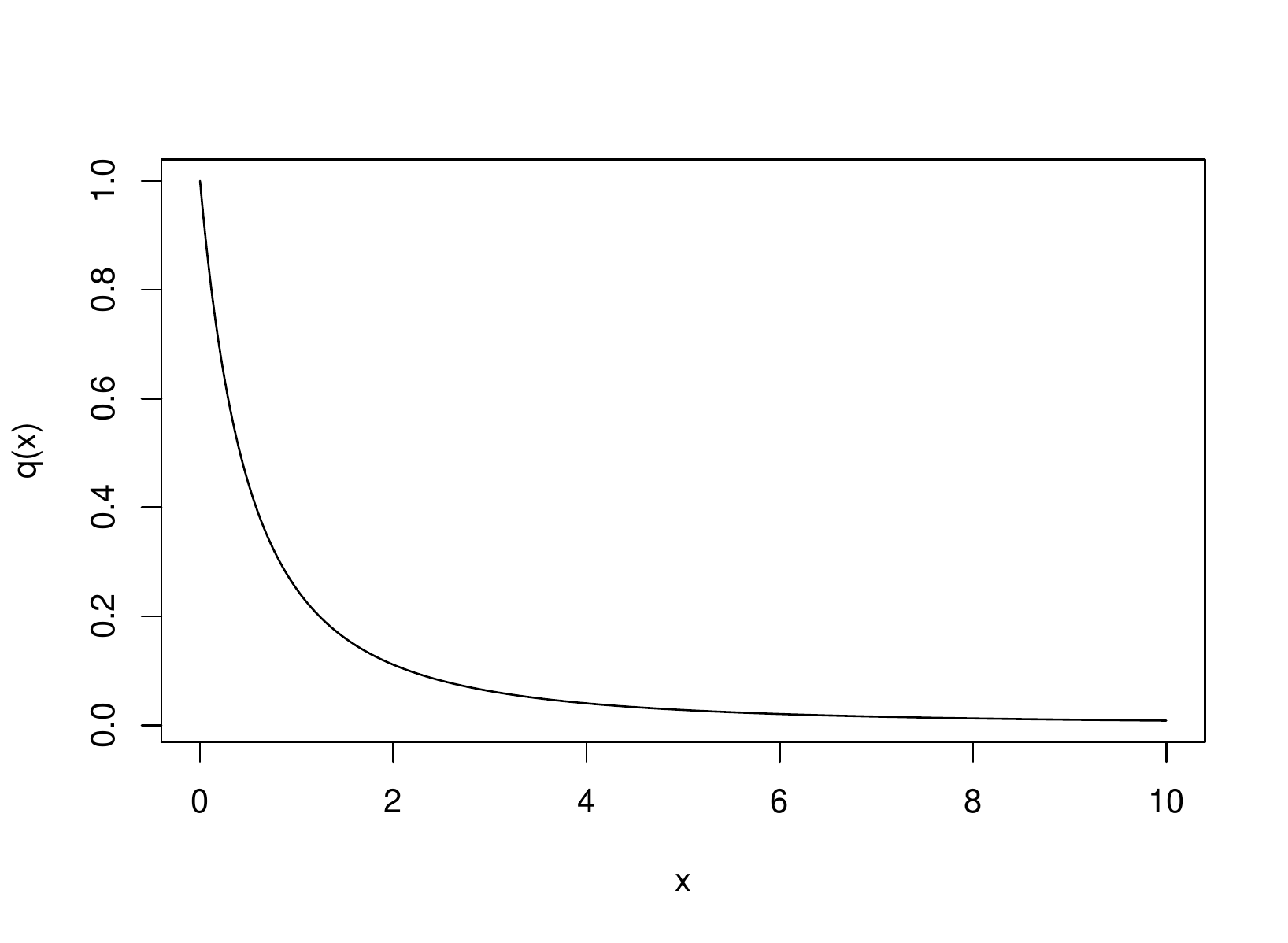}
\caption{Prior $q(x)=1/(1+x)^2$}
\label{fig:prior2}
\end{figure}

Making the connection more directly with the theory set out in the paper  with $\alpha(u)=u^{-2}$, we have
$\phi(u,v)=u^3/v^2$ which is easy to show satisfies $\phi=u\partial\phi/\partial u+v\partial \phi/\partial v$. Then  using (\ref{score}) we get
\begin{equation}\label{2dimscore}
S(q,q',q'')=3\left(\frac{q}{q'}\right)^2\left\{2\frac{q\,q''}{(q')^2}-3\right\}.
\end{equation}
Setting this to zero; i.e. $2qq''=3(q')^2$, this
can be solved and the solution is precisely of the form $a/(a+x)^2$. We now write this all out in a theorem.

\begin{theorem}
Let $\phi(p,p')=p^3/(p')^2$ be the convex function appearing in (\ref{breg2dim}); i.e. $\phi(u,v)=u\alpha(v/u)$ with $\alpha(\xi)=\xi^{-2}$. Then $\phi$ is convex for either $\xi<0$ or $\xi>0$. The Euler equation associated with this $\phi$; i.e. $d/dx\,\, \partial \phi/\partial p'=\partial \phi/\partial p$, yields
$$6p^3\,p''/(p')^4-6(p/p')^2=3(p/p')^2,$$
the solution to which can be written as $S(p,p',p'')=0$ where $S$ is the corresponding score function (\ref{2dimscore}). 
\end{theorem}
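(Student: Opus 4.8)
The plan is to verify the three assertions of the statement in turn — convexity of $\phi$, the precise form of the Euler equation, and its equivalence to $S(p,p',p'')=0$ — each of which reduces to elementary calculus once set up correctly.

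First, for convexity I would work with $\phi(u,v)=u^3/v^2$ on the region $\{u>0,\ v\neq 0\}$ and compute the Hessian directly: $\phi_{uu}=6u/v^2$, $\phi_{uv}=-6u^2/v^3$, $\phi_{vv}=6u^3/v^4$, so that $\phi_{uu}\phi_{vv}-\phi_{uv}^2=36u^4/v^6-36u^4/v^6=0$ while $\phi_{uu}>0$. Hence the Hessian is positive semidefinite on each of the connected components $v>0$ and $v<0$, i.e. for $\xi=v/u>0$ or $\xi<0$, so $\phi$ is convex there. Equivalently, one may observe that $\phi$ is the perspective function $u\,\alpha(v/u)$ of $\alpha(\xi)=\xi^{-2}$, which has $\alpha''(\xi)=6\xi^{-4}>0$ for $\xi\neq 0$; convexity of the perspective function on $\{u>0\}$ then follows from convexity of $\alpha$. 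The degeneracy of the Hessian (vanishing determinant) is to be expected, since $\phi$ is $1$-homogeneous and therefore affine along each ray from the origin, so convexity here is non-strict.

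Next I would evaluate the two partial derivatives entering the Euler equation $\frac{d}{dx}\,\partial\phi/\partial p'=\partial\phi/\partial p$: namely $\partial\phi/\partial p=3p^2/(p')^2$ and $\partial\phi/\partial p'=-2p^3/(p')^3$. Differentiating the latter in $x$ by the product and chain rules gives $\frac{d}{dx}\bigl(-2p^3/(p')^3\bigr)=-6p^2/(p')^2+6p^3p''/(p')^4$, and equating this with $\partial\phi/\partial p$ produces exactly $6p^3p''/(p')^4-6(p/p')^2=3(p/p')^2$, the Euler equation claimed in the statement.

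Finally, collecting terms, the Euler equation reads $6p^3p''/(p')^4-9(p/p')^2=0$; factoring out $3(p/p')^2$ gives $3(p/p')^2\bigl\{2pp''/(p')^2-3\bigr\}=0$, which is precisely $S(p,p',p'')=0$ for the score $S$ of \eqref{2dimscore}. This is the same as the equation $2pp''=3(p')^2$ already isolated just before the theorem, whose solutions are $p(x)=a/(a+x)^2$, as one checks directly or derives via the substitution $u=-p'/p$ used in the text. I do not anticipate a genuine obstacle beyond bookkeeping; the one point requiring care is the convexity claim, where one must restrict to $p>0$ with $p'$ of a fixed sign (so that $\phi$ is smooth and finite and the domain is connected) and accept that convexity is non-strict because $\phi$ is $1$-homogeneous.
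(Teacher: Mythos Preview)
Your proposal is correct and follows essentially the same route as the paper: the convexity argument via the Hessian of the perspective function $u\,\alpha(v/u)$ is precisely the content of the paper's Lemma~\ref{lem_convex}, and your direct computation of $\partial\phi/\partial p$, $\partial\phi/\partial p'$ and $\tfrac{d}{dx}\,\partial\phi/\partial p'$ reproduces the Euler equation and its identification with $S=0$ exactly as in the discussion immediately preceding the theorem. Your observation that the Hessian determinant vanishes (so convexity is non-strict, as forced by $1$-homogeneity) is a nice clarification that the paper leaves implicit.
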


To obtain the corresponding prior on $(-\infty,\infty)$ through symmetry about 0, we get
\begin{equation}\label{eq:cases}
    q(x) =
    \begin{cases}
        \half a\,(x-a)^{-2}, & \text{for } x<0\\
        \half a\,(x+a)^{-2}, & \text{for } x\geq0.
     \end{cases}
\end{equation}
Here we motivate the natural objective choice for the constant $a$ as 1. The only important transformation to be considered here is $\phi=1/\theta$. This, for example, would take variance to precision. For the prior in \eqref{eq_prior2} to be invariant, that is $p_a(\phi)=a/(a+\phi)^2$, we need to have $a=1$, since
\begin{eqnarray*}
p_a(\phi) = \frac{a}{(a+1/\phi)^2}\left|\frac{1}{\phi^2}\right| 
= \frac{a}{(a\phi + 1)^2},
\end{eqnarray*}
which yields $p_a(\theta)=a/(a+\theta)^2$ iff $a=1$. All the illustrations that follow have been made taking this choice for $a$.

\subsection{First examples}

The first simulation study was to make inference on a scale parameter;  specifcally the standard deviation of a normal density with mean $\mu=0$ and standard deviation $\sigma\in(0,\infty)$. We compare prior \eqref{eq_prior2} with Jeffreys prior, that is $\pi(\sigma)\propto1/\sigma$. We took 250 samples of size $n=100$, obtained the posterior distributions using standard MCMC methods (6000 iterations, with a burn--in of 1000 and a thinning of 10) and computed the following two indexes. 
The  root Mean Squared Error (MSE) divided by the true parameter value from the sample mean;
$$\mbox{MSE} = \frac{\sqrt{\mathbb{E}(\widehat{\sigma}-\sigma)^2}}{\sigma},$$
where $\widehat{\sigma}$ is the posterior mean, and the coverage of the 95\% posterior credible interval for $\sigma$. Table \ref{tab:tab1} shows the results for the MSE for $\sigma=\{0.25,0.50,1,2,5,10,20\}$, where we see little difference between the performance of the two priors.
However, the important point is that the score prior is proper, an important property.

\begin{table}[h!]
\centering
\begin{tabular}{|c|cc|}
\hline 
$\sigma$ & Jeffreys Prior & Score Prior \\ 
\hline 
0.25 & 0.0723 & 0.0720 \\ 
0.50 & 0.0721 & 0.0721 \\ 
1 & 0.0721 & 0.0716 \\ 
2 & 0.0719 & 0.0716 \\ 
5 & 0.0722 & 0.0720 \\ 
10 & 0.0723 & 0.0716 \\ 
20 & 0.0722 & 0.0718 \\ 
\hline 
\end{tabular}
\caption{Posterior MSE for $\sigma$ to compare Jeffreys prior against the prior based on scores for data generated by a normal density with mean $\mu=0$ and unknown variance. This has been obtained on 250 samples of size $n=100$ and with standard deviation $\sigma=\{0.25,0.50,1,2,5,10,20\}$.}
\label{tab:tab1}
\end{table}

The coverage of the 95\% posterior credible interval is shown in Table \ref{tab:COVnormal}, where we can also see very similar behaviour between the two priors; although both show an average slightly lower than the nominal 0.95.
\begin{table}[h!]
\centering
\begin{tabular}{|c|cc|}
\hline 
$\sigma$ & Jeffreys Prior & Score Prior \\ 
\hline 
0.25 & 0.91 & 0.91 \\ 
0.50 & 0.92 & 0.91 \\ 
1 & 0.91 & 0.91 \\ 
2 & 0.93 & 0.91 \\ 
5 & 0.93 & 0.91 \\ 
10 & 0.90 & 0.92 \\ 
20 & 0.90 & 0.93 \\ 
\hline 
\end{tabular}
\caption{Posterior coverage of the 95\% credible interval for $\sigma$ to compare Jeffreys prior against the prior based on scores for data generated by a normal density with mean $\mu=0$ and unknown variance. This has been obtained on 250 samples of size $n=100$ and with standard deviation $\sigma=\{0.25,0.50,1,2,5,10,20\}$.}
\label{tab:COVnormal}
\end{table}

To illustrate the frequentist properties of the prior in \eqref{eq:cases}, we have compared it to a flat prior, $\pi(\mu)\propto1$, in making inference for a location parameter of a log--normal density with unknown $\mu$ and known scale parameter $\sigma=1$. Similar to the previous case, we have drawn 250 samples of size $n=100$ and computed the MSE and coverage of the 95\% posterior credible interval. The values of $\mu$ considered were from the set $\{0,1,5,50,100\}$. Table \ref{tab:tab2} shows the MSE for the two priors, where we see that, apart for a small difference for $\mu=0$, the two priors appear to perform in a very similar fashion.

\begin{table}[h!]
\centering
\begin{tabular}{|c|cc|}
\hline 
$\mu$ & Jeffreys Prior & Score Prior \\ 
\hline 
0 & 0.0114 & 0.0007 \\ 
1 & 0.0086 & 0.0091 \\ 
5 & 0.0085 & 0.0085 \\ 
10 & 0.0085 & 0.0085 \\ 
50 & 0.0085 & 0.0087 \\ 
100 & 0.0085 & 0.0087 \\
\hline 
\end{tabular}
\caption{Posterior MSE for $\mu$ to compare Jeffreys prior against the prior based on scores for data generated from a log-normal density with unknown location parameter $\mu$ and known scale parameter $\sigma=1$. This has been obtained on 250 samples of size $n=100$ and with standard deviation $\mu=\{0,1,1,5,50,100\}$.}
\label{tab:tab2}
\end{table}

The coverage of the 95\% posterior credible interval for $\mu$ is shown in Table~\ref{tab:COVlognormal}, where we can see a very similar behaviour for the two priors, with an exception for $\mu=0$, although the two coverage level are perfectly acceptable.

\begin{table}[h!]
\centering
\begin{tabular}{|c|cc|}
\hline 
$\mu$ & Jeffreys Prior & Score Prior \\ 
\hline 
0 & 0.92 & 1.00 \\ 
1 & 0.97 & 0.98 \\ 
5 & 0.97 & 0.98 \\ 
10 & 0.97 & 0.98 \\ 
50 & 0.97 & 0.98 \\ 
100 & 0.97 & 0.98 \\
\hline 
\end{tabular}
\caption{Posterior coverage of the 95\% credible interval for $\mu$ to compare Jeffreys prior against the prior based on scores for data generated from a log-normal density with unknown location parameter $\mu$ and known scale parameter $\sigma=1$. This has been obtained on 250 samples of size $n=100$ and with standard deviation $\mu=\{0,1,1,5,50,100\}$.}
\label{tab:COVlognormal}
\end{table}

The general conclusion for the two experiments above, is that the prior obtained via $\alpha(u)=1/u^2$ exhibits tails which are sufficiently heavy to generate optimal frequentist performance even for large parameter values. These properties are comparable to those obtained by Jeffreys prior, which is well--known for being the objective prior yielding good frequentist properties of the posterior. The advantage with our prior is that it is always proper.

\section{Mixture models}\label{sc_mixtures}
A major area of challenge for objective priors is finite mixture models, where observations are assumed to be generated by the following model;
\begin{equation}\label{eq_mixmodel}
f(y) =\sum_{l=1}^k \omega_l\,\,f_l(y\mid \theta_l), \qquad \sum_{l=1}^k\omega_l=1,
\end{equation}
for densities $(f_l(\cdot|\theta_l))$, where $\theta_l$ is vector of parameters characterising the densities. Given the ill-defined nature of the model in \eqref{eq_mixmodel}, \cite{GrazianRobert2018}, the use of improper priors for the parameters is not acceptable. In particular, if we consider densities $f_l$ to be location-scale distributions, \cite{GrazianRobert2018} show that, under certain circumstances, Jeffreys priors cannot be used, due to their improperness. For example, if all parameters are unknown (i.e. weights, location and scale parameters), then Jeffreys prior yields improper posteriors. Even in more restrictive circumstances the use of improper priors if troublesome; if we consider only the location parameters unknown, then Jeffreys prior yields improper posteriors for $k>2$.  The above represents severe limitations in Bayesian analysis. Therefore, the objective priors proposed in this paper represent a clear solution to the above problem, avoiding reparametrisation or addition of extra parameters, as proposed, for example in \cite{GrazianRobert2018}, the latter resulting in an increased uncertainty. 


\subsection{Single sample}
In this first simulation study, we illustrate the performance of the proposed prior on the following three-component mixture model, from which we have drawn a sample of size $n=200$,
\begin{equation}\label{example1}
0.25\,\mbox{N}(0,1.2^2) + 0.65\,\mbox{N}(-10,1) + 0.10\,\mbox{N}(7,0.8^2).
\end{equation}
In terms of prior distributions, we have assumed a symmetric Dirichlet prior with concentration parameters equal to one, and for the means and standard deviations of the Gaussian components the proposed prior on $(-\infty,\infty)$ and on $(0,\infty)$, respectively. We have also assumed independent information, so the priors for the parameters of the component have the following form,
$$\pi(\pmb\mu,\pmb\sigma) = \prod_{l=1}^3\pi_l(\mu_l)\times\prod_{l=1}^3\pi(\sigma_l),$$
where $\pmb\mu=(\mu_1,\mu_2,\mu_3)$ and $\pmb\sigma=(\sigma_1,\sigma_2,\sigma_3).$ The histogram of the sample, together with the true density, is shown in Fig.~\ref{fig:example1_hist}.

\begin{figure}[h]
\centering
\includegraphics[width=14cm,height=6cm]{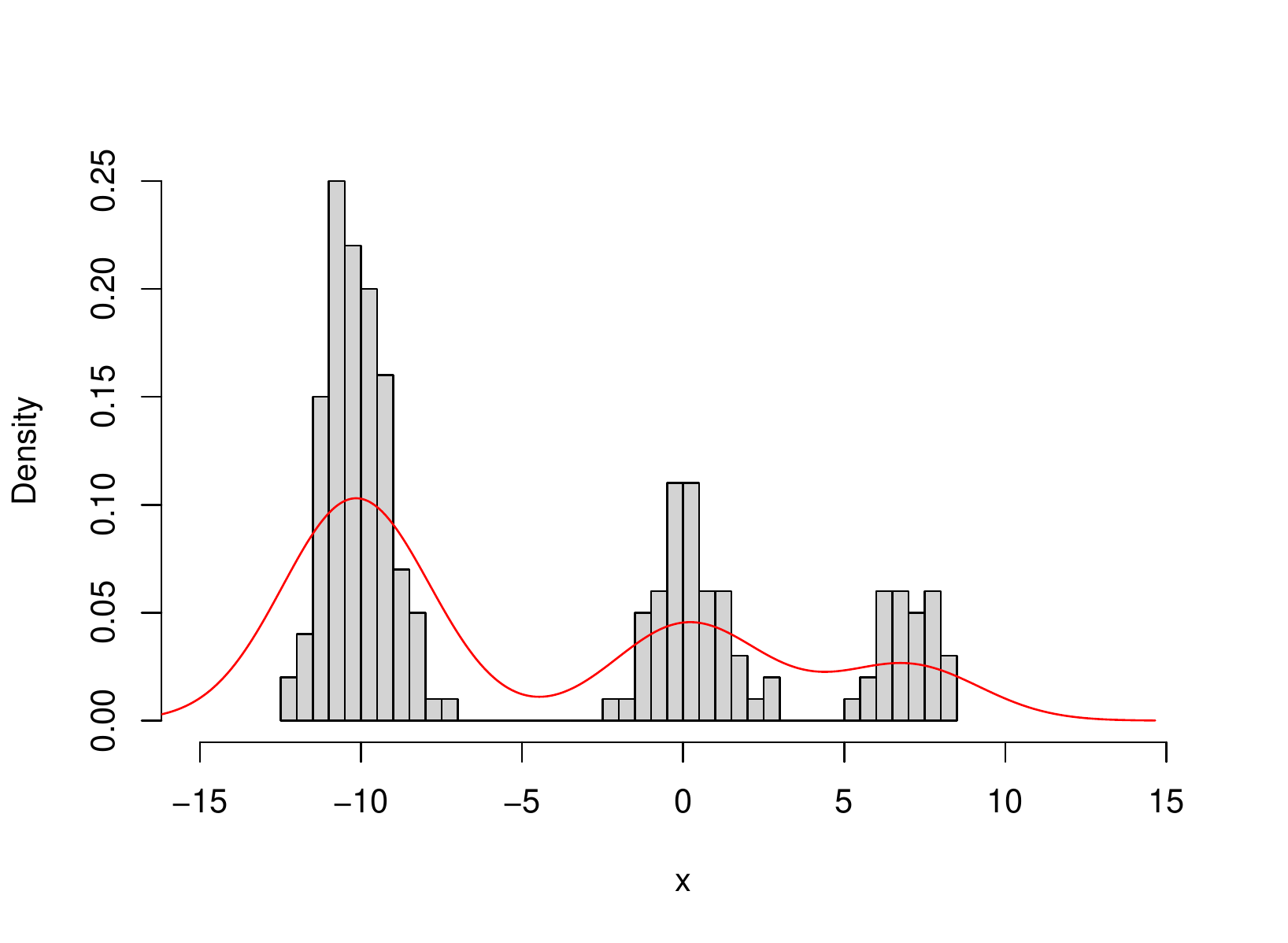}
\caption{Histogram of the sample of size $n=200$ from model \eqref{example1}, and density (red line) of the true model.}
\label{fig:example1_hist}
\end{figure}

The analysis uses a Metropolis--within--Gibbs algorithm with a total of 60000 iterations, a burn--in of 10000 and a thinning of 100. We note that the true values are within the posterior credible intervals.

\begin{table}[htbp]
  \centering
    \begin{tabular}{c|ccc}
    \hline
    Component & $\omega$ & $\mu$  & $\sigma$ \\
    \hline
    \multirow{2}[0]{*}{1} & 0.26  & -10.2  & 1.1 \\
          & (0.21, 0.33) & (-10.5, -9.9) & (0.9, 1.4) \\
    \multirow{2}[0]{*}{2} & 0.66  & 0.0 & 1.3 \\
          & (0.58,0.72) & (-0.1, 0.2) & (1.1, 1.5) \\
    \multirow{2}[0]{*}{3} & 0.08  & 6.7  & 0.9 \\
          & (0.04, 0.12) & (6.2, 7.2) & (0.6, 1.4) \\
          \hline
    \end{tabular}
    \caption{Posterior means and 95\% credible intervals (in brackets) for a sample of size $n=200$ from model \eqref{example1}.}
  \label{tab:singlesample_100}
\end{table}

\subsection{Repeated sampling}
To have a more thorough understanding of the proposed prior implementation, we have performed some experiment on repeated sampling, taking into consideration different scenarios, which include different sample sizes and model structure. We have limited the analysis to mixture of normal densities, but it is obvious that, due to the properness of the prior, its implementation can be extended to any family of densities for the mixture components. We computed the posterior distribution for $M=20$ replications of sample of size $n=(50, 100, 200)$ of mixture models with number of components $k=(3,4,5)$. For this illustrations, we reported the results for the means and the standard deviations of each components, as they are estimated using the proposed prior. The models used are as follows:
\begin{eqnarray}
\frac{1}{3}\mbox{N}(-10,1) &+& \frac{1}{3}\mbox{N}(0,0.8^2) + \frac{1}{3}\mbox{N}(7,1.2^2) \nonumber \\
\frac{1}{4}\mbox{N}(-10,1) &+&\frac{1}{4}\mbox{N}(-3,0.9) +\frac{1}{4}\mbox{N}(0,0.8) +\frac{1}{4}\mbox{N}(7,1.2) \nonumber \\
\frac{1}{5}\mbox{N}(-10,1) &+& \frac{1}{5}\mbox{N}(-3,0.9) +\frac{1}{5}\mbox{N}(0,0.8) +\frac{1}{5}\mbox{N}(3,1) +\frac{1}{5}\mbox{N}(7,1.2). \nonumber
\end{eqnarray}
Note that we have not chosen variable weights as these are not associated with a proposed prior.

Fig.~\ref{fig:means} shows the boxplots of the posterior means for the $\pmb\mu$ of the mixture components, while Fig.~\ref{fig:sds} shows the boxplots of the posterior means for the standard deviations $\pmb\sigma$. As one would expect, the larger the sample size the less variability in the repeated estimates, for the same number of components. Keeping the sample size fixed, we notice an increase in the variability of the estimates as the number of components grows, which is also an expected result.

\begin{figure}[htb]
    \centering 
\begin{subfigure}{0.25\textwidth}
  \includegraphics[width=\linewidth]{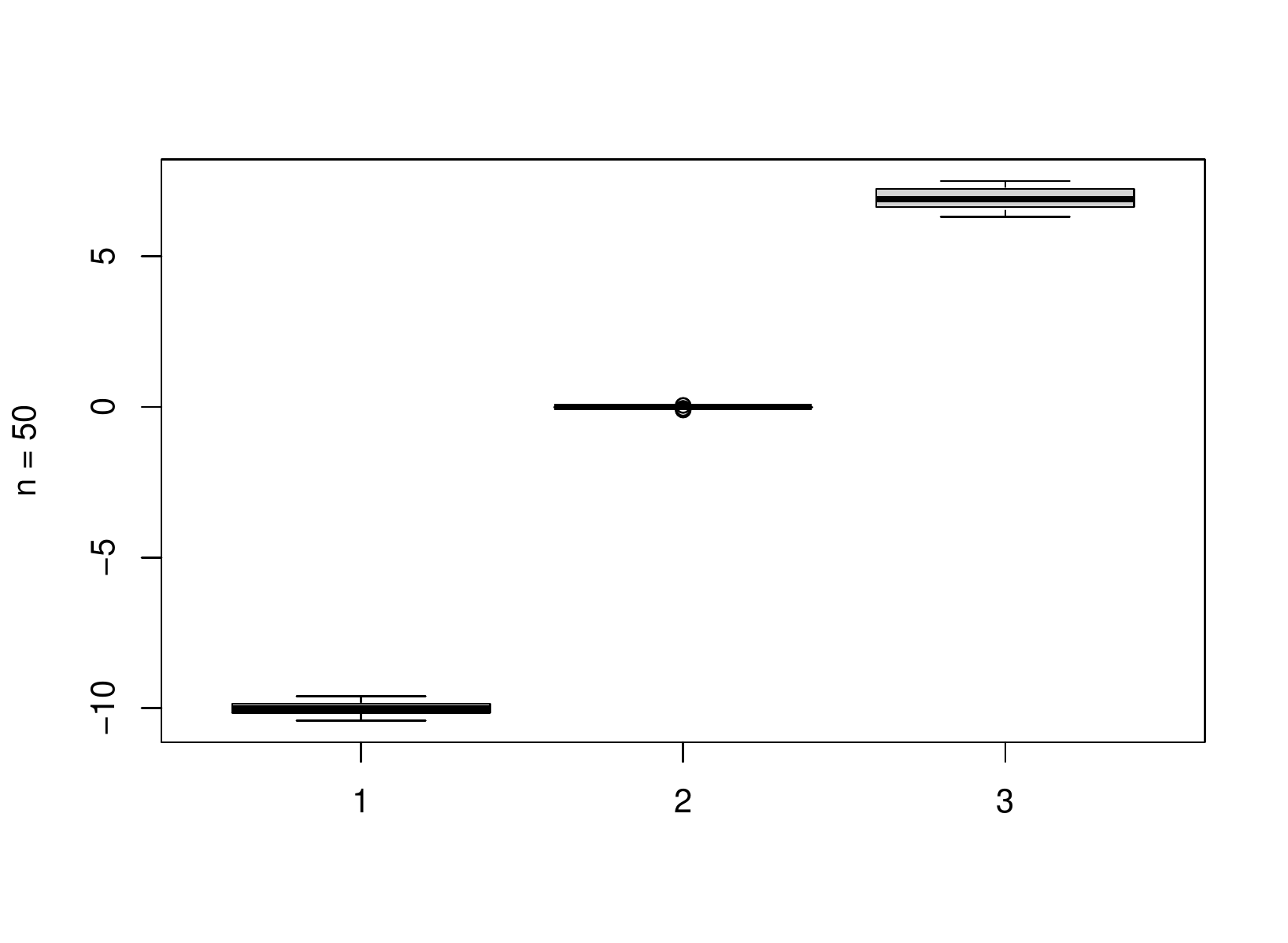}
  \label{fig:1a}
\end{subfigure}\hfil 
\begin{subfigure}{0.25\textwidth}
  \includegraphics[width=\linewidth]{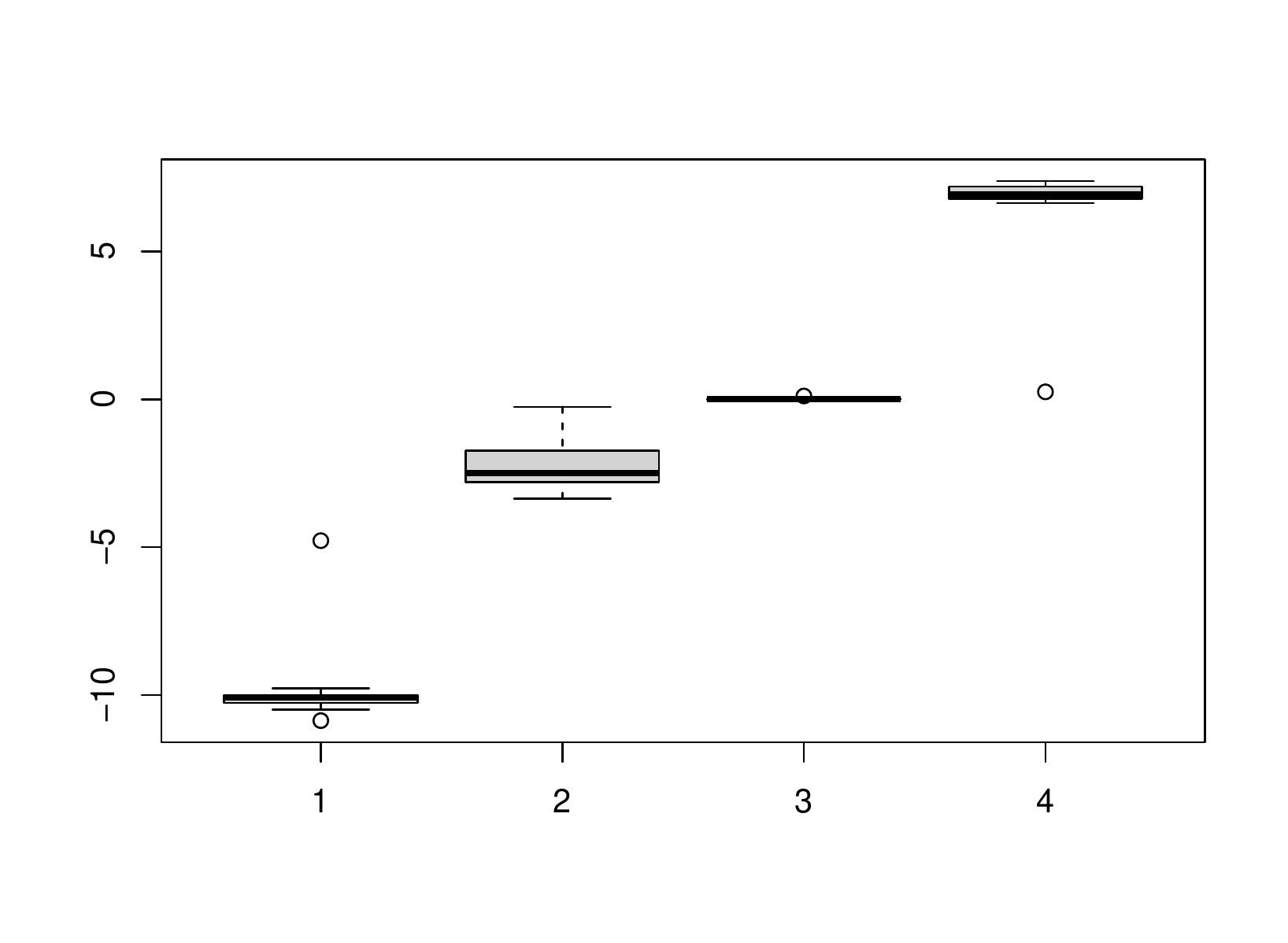}
  \label{fig:2a}
\end{subfigure}\hfil 
\begin{subfigure}{0.25\textwidth}
  \includegraphics[width=\linewidth]{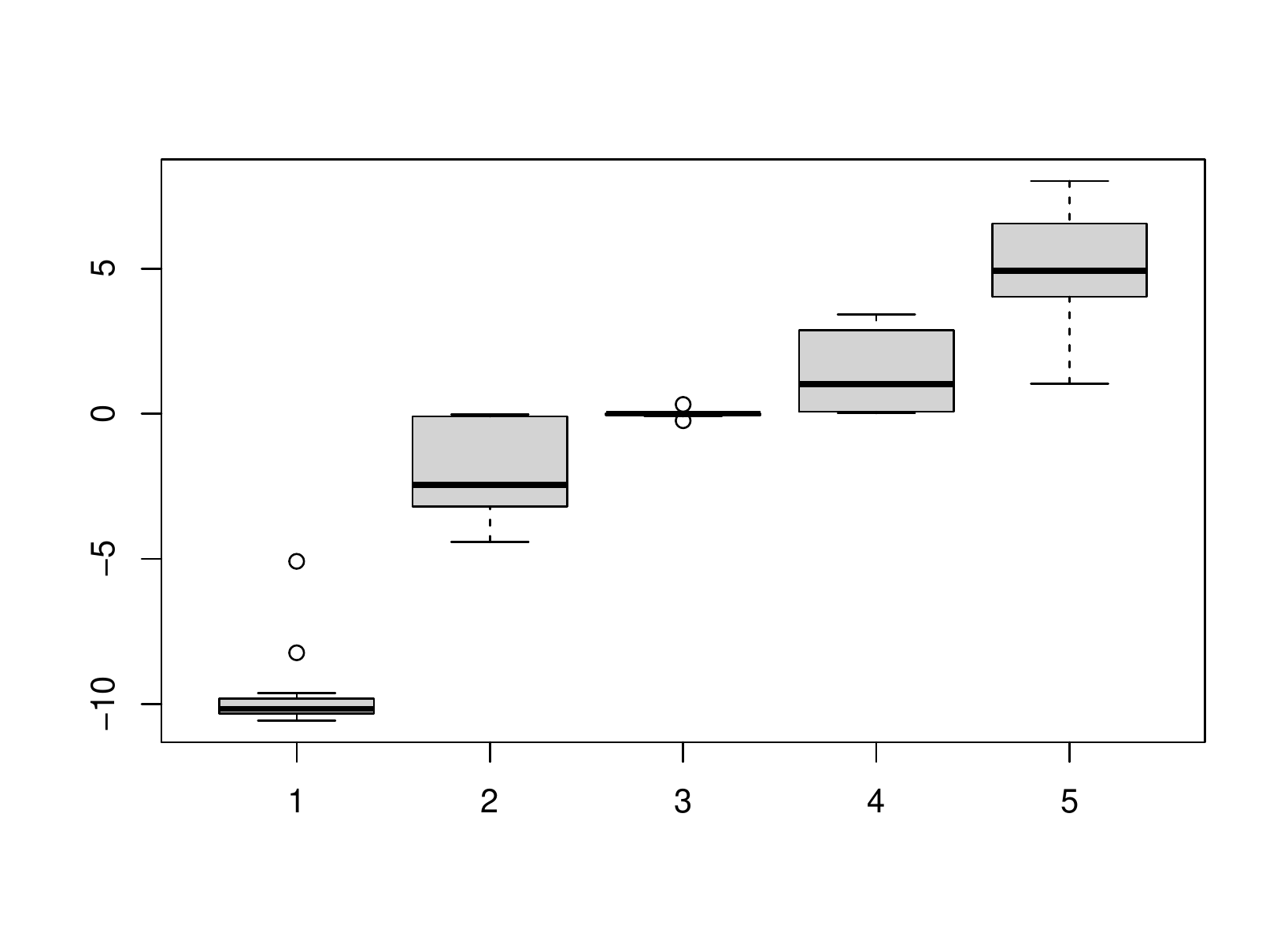}
  \label{fig:3a}
\end{subfigure}

\medskip
\begin{subfigure}{0.25\textwidth}
  \includegraphics[width=\linewidth]{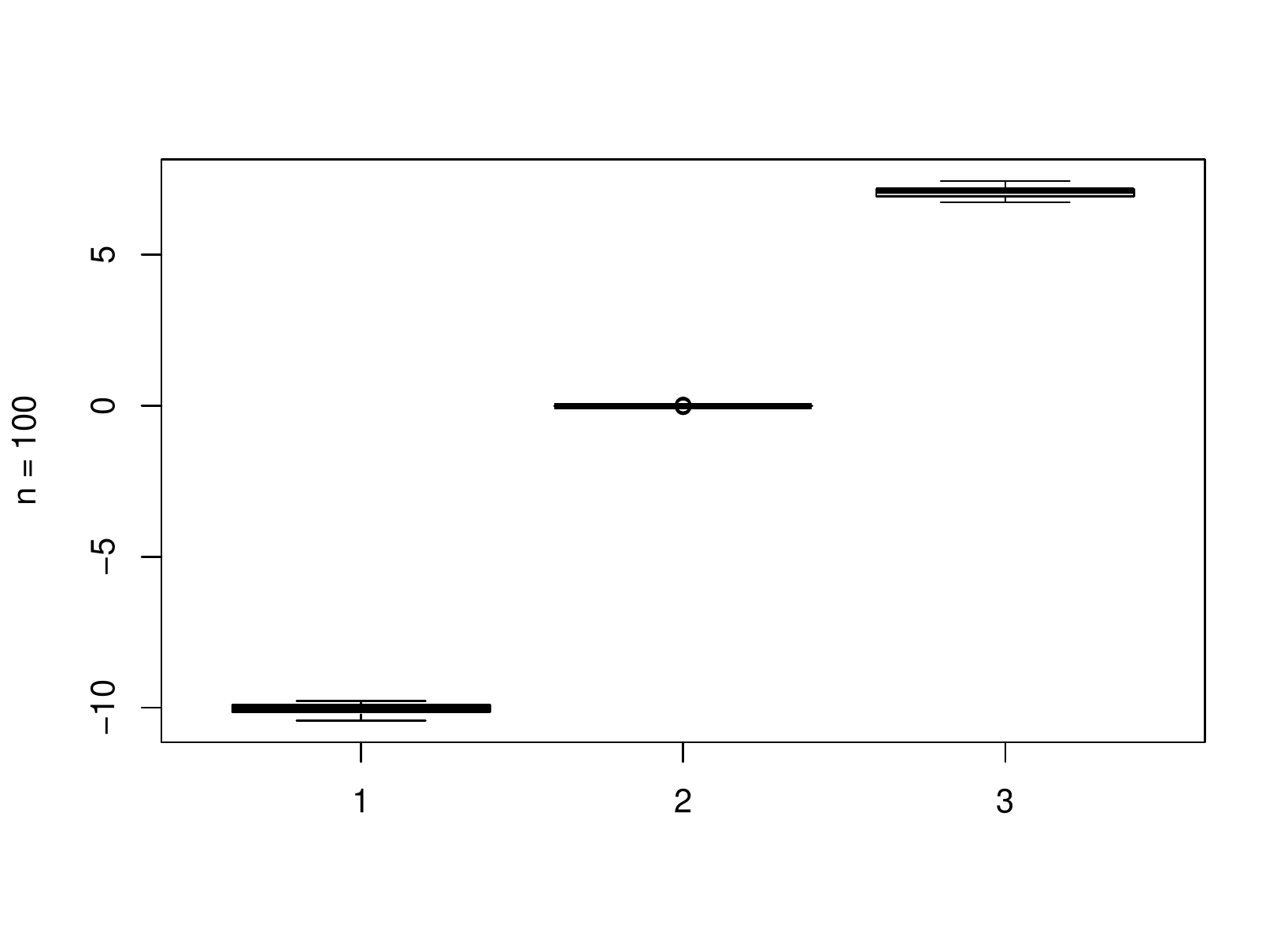}
  \label{fig:4a}
\end{subfigure}\hfil 
\begin{subfigure}{0.25\textwidth}
  \includegraphics[width=\linewidth]{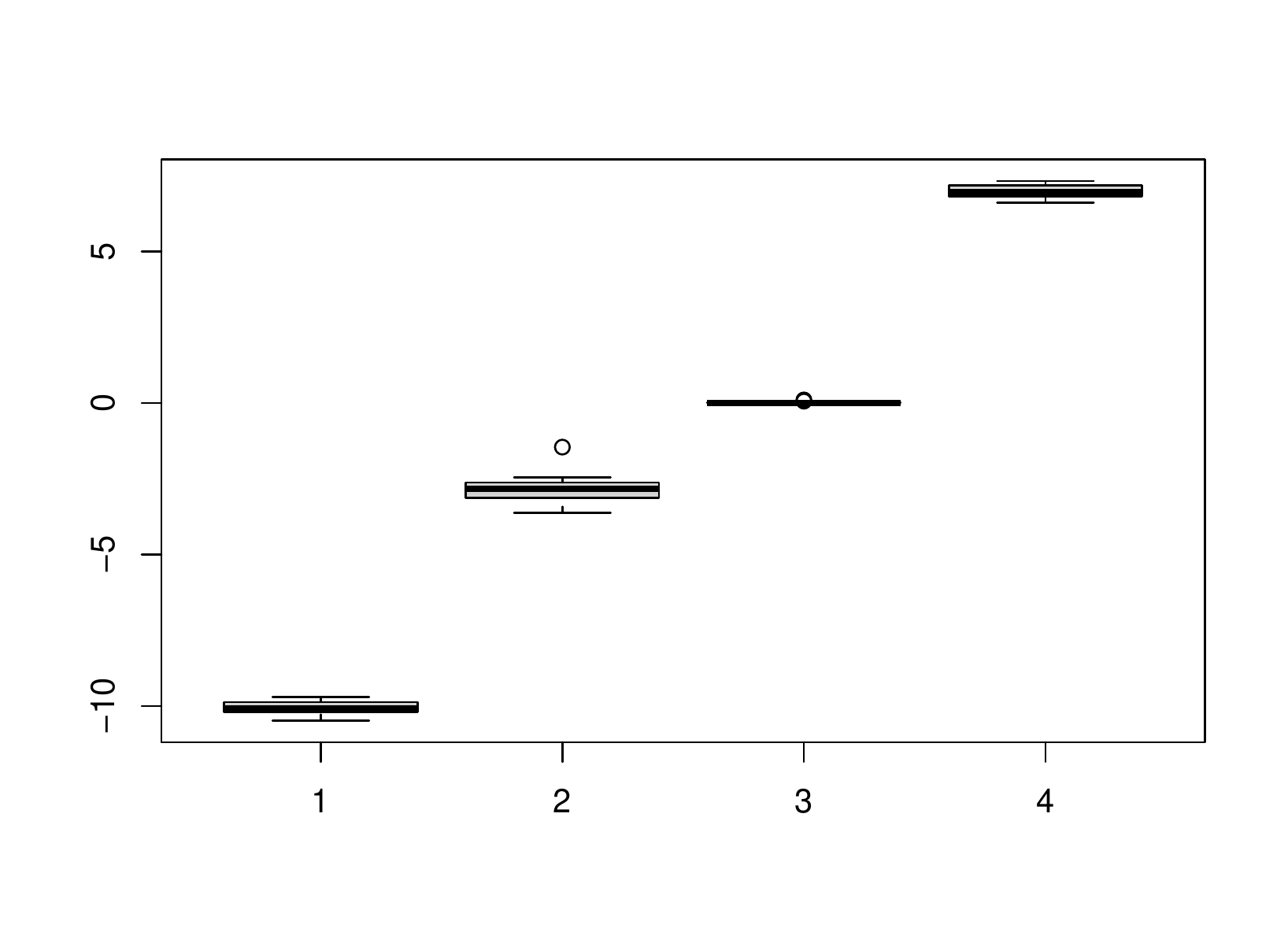}
  \label{fig:5a}
\end{subfigure}\hfil 
\begin{subfigure}{0.25\textwidth}
  \includegraphics[width=\linewidth]{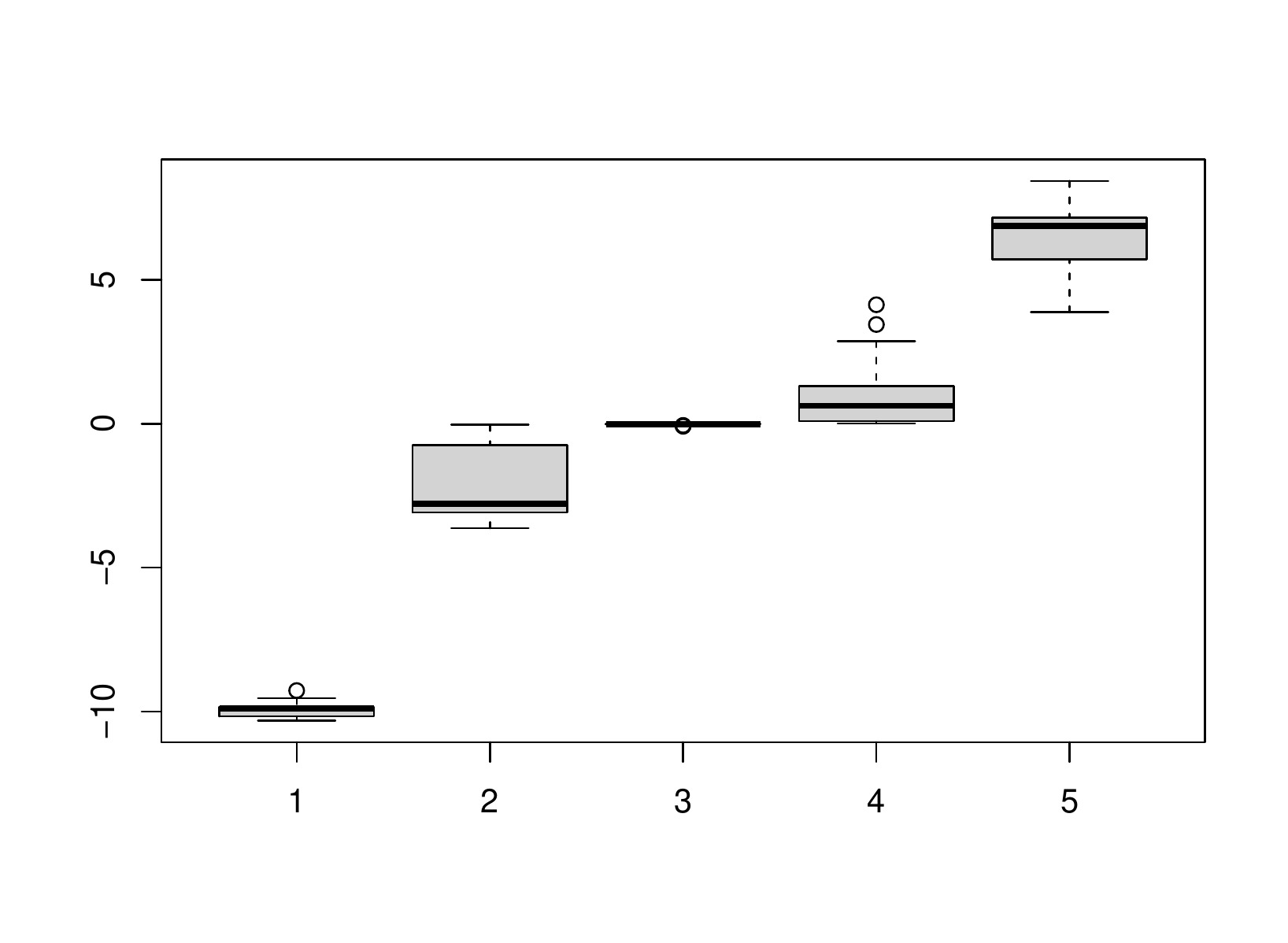}
  \label{fig:6a}
\end{subfigure}

\medskip
\begin{subfigure}{0.25\textwidth}
  \includegraphics[width=\linewidth]{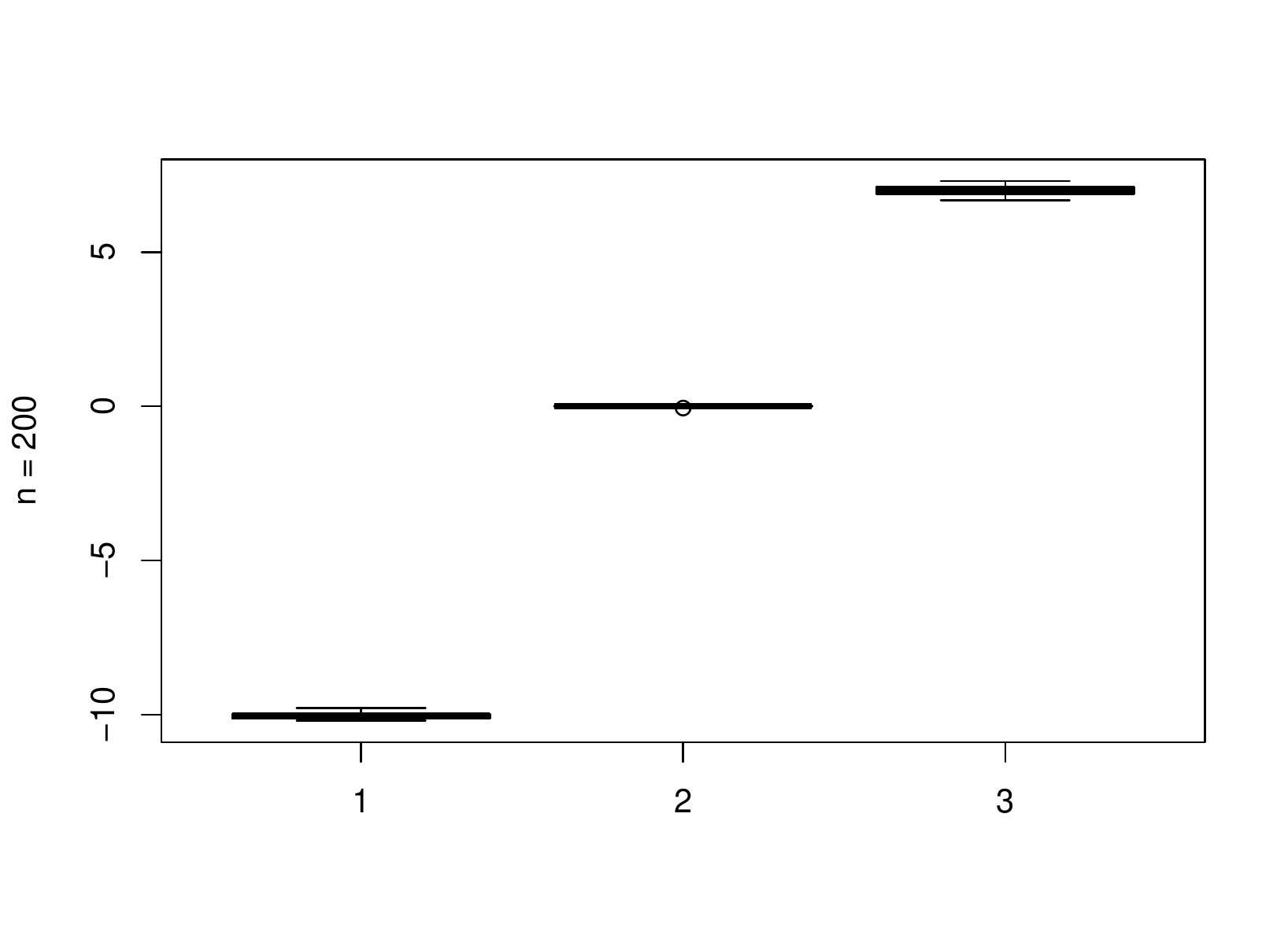}
  \label{fig:7a}
\end{subfigure}\hfil 
\begin{subfigure}{0.25\textwidth}
  \includegraphics[width=\linewidth]{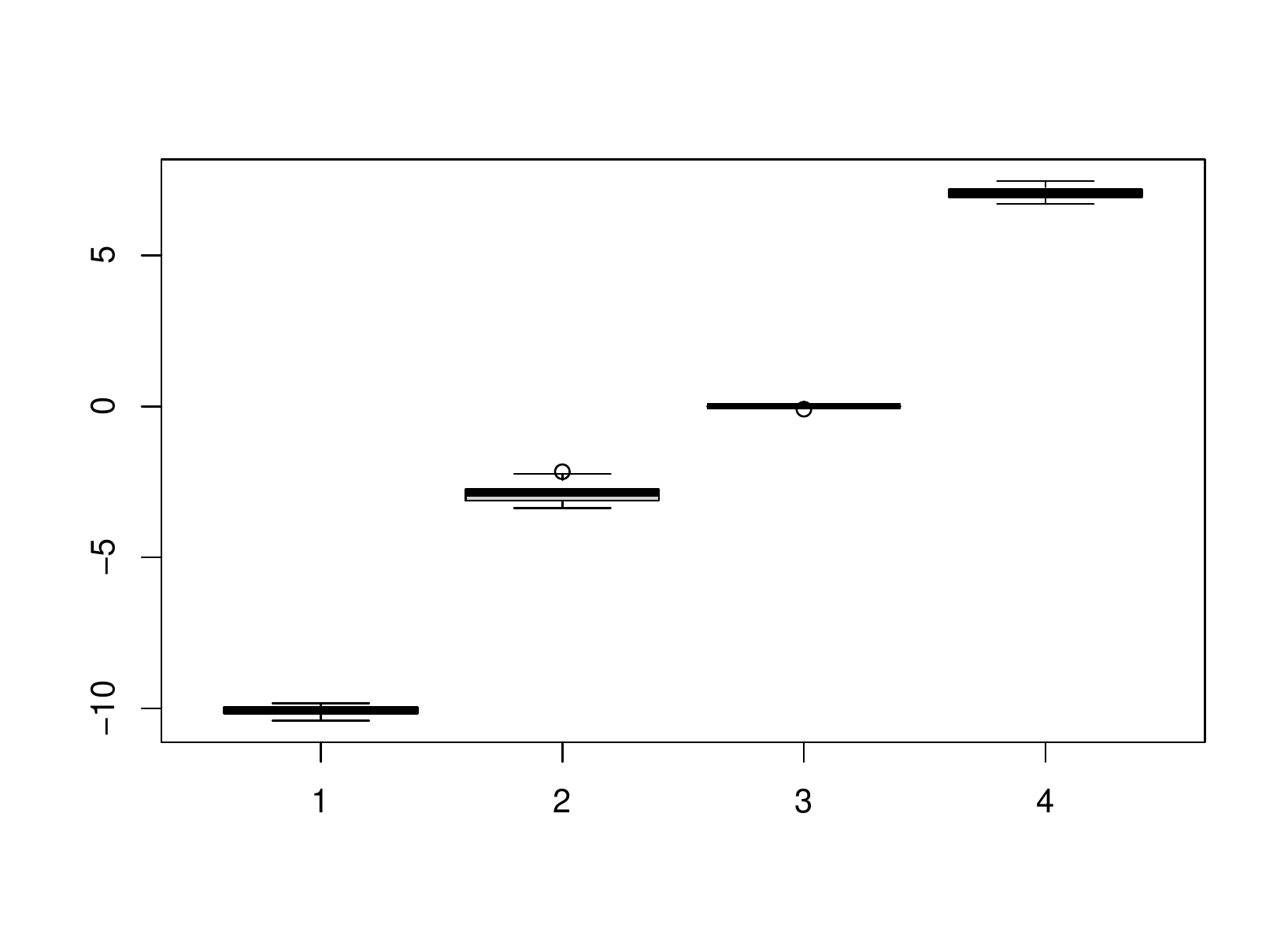}
  \label{fig:8a}
\end{subfigure}\hfil 
\begin{subfigure}{0.25\textwidth}
  \includegraphics[width=\linewidth]{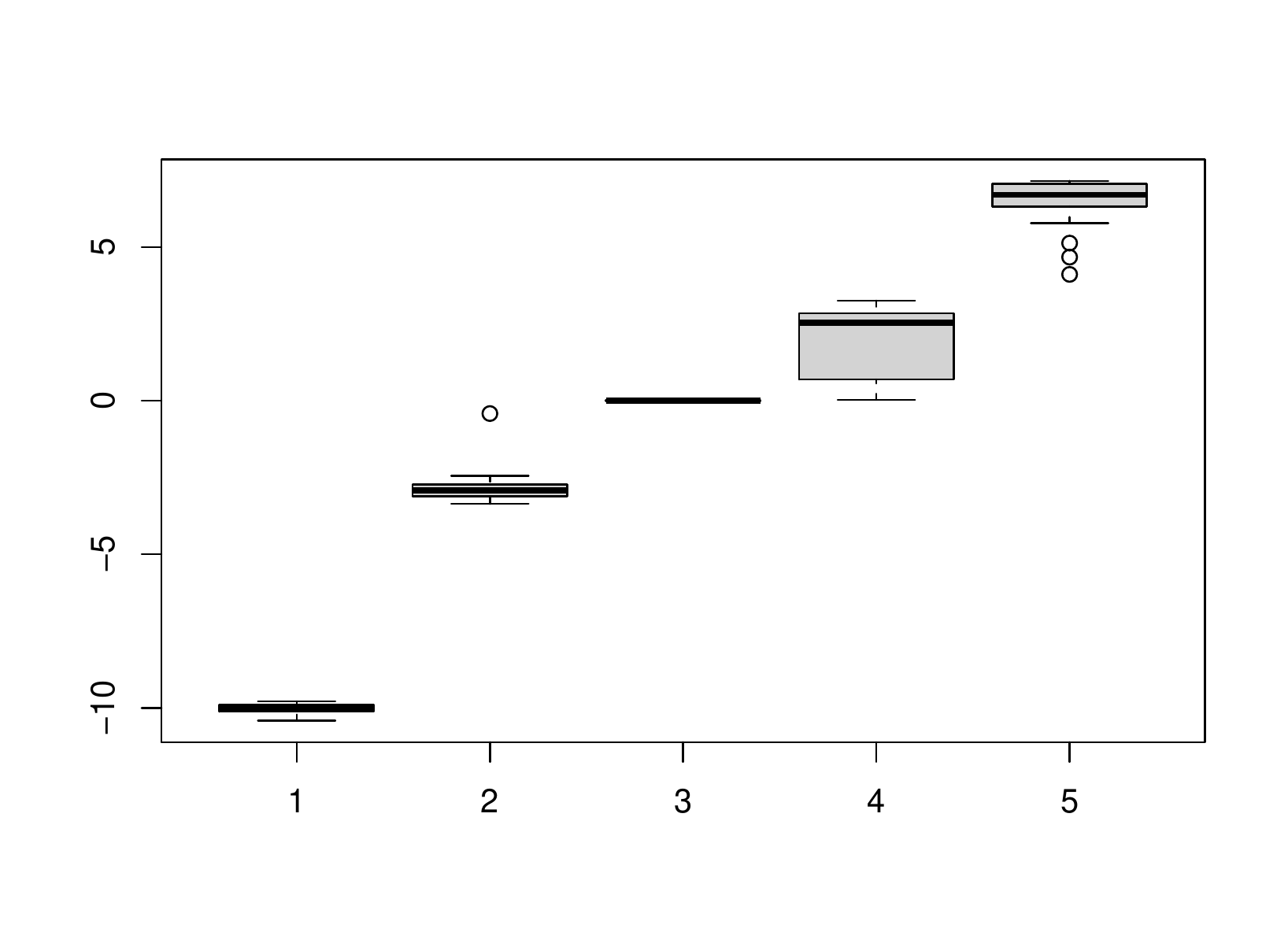}
  \label{fig:9a}
\end{subfigure}
\caption{Boxplots of posterior means of the means $\pmb\mu$, using the proposed prior, for samples of size $n=(50,100,200)$, plotted by row, and number of components $k=(3,4,5)$, plotted by column.}
\label{fig:means}
\end{figure}

\begin{figure}[htb]
    \centering 
\begin{subfigure}{0.25\textwidth}
  \includegraphics[width=\linewidth]{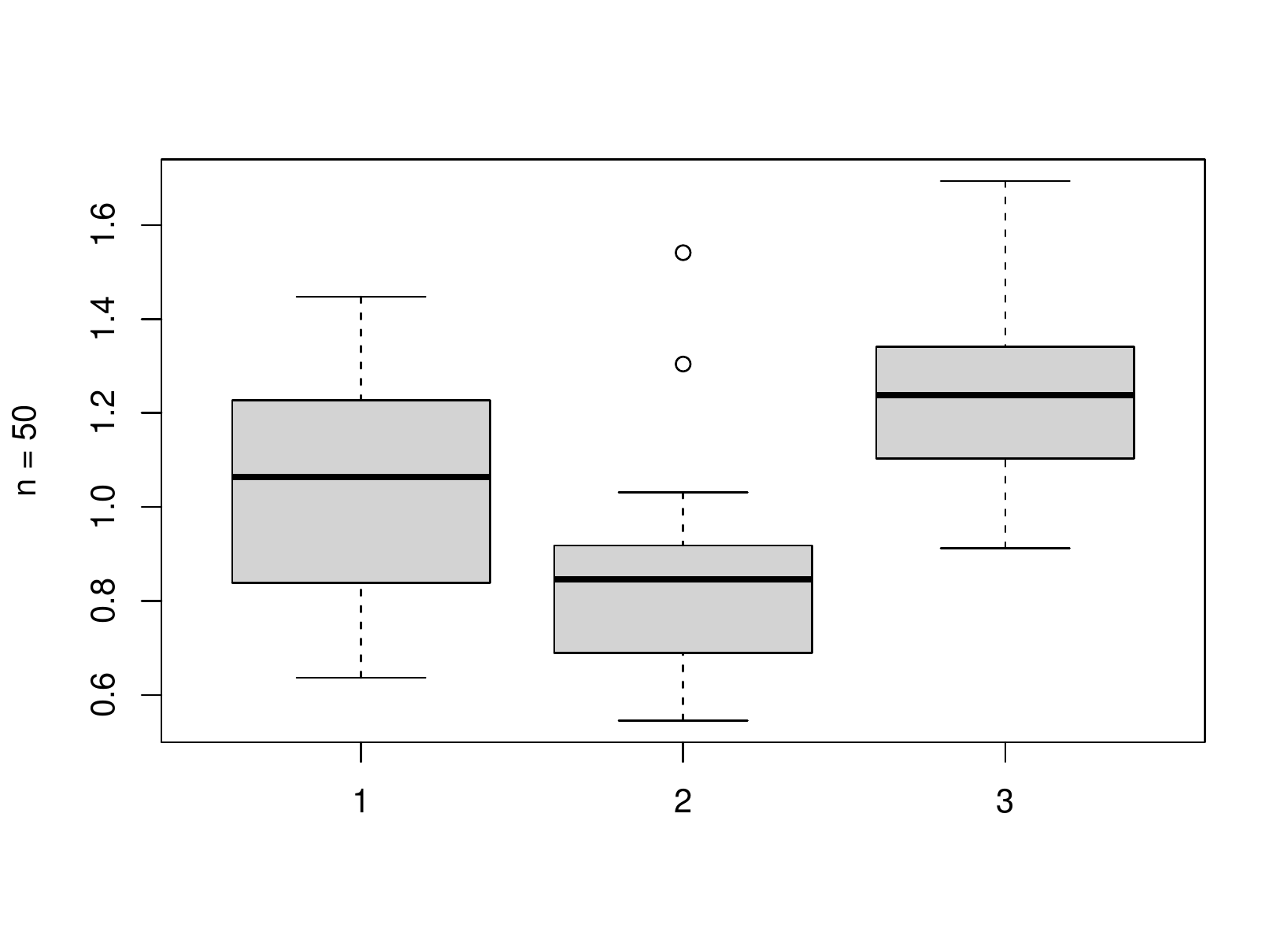}
  \label{fig:1b}
\end{subfigure}\hfil 
\begin{subfigure}{0.25\textwidth}
  \includegraphics[width=\linewidth]{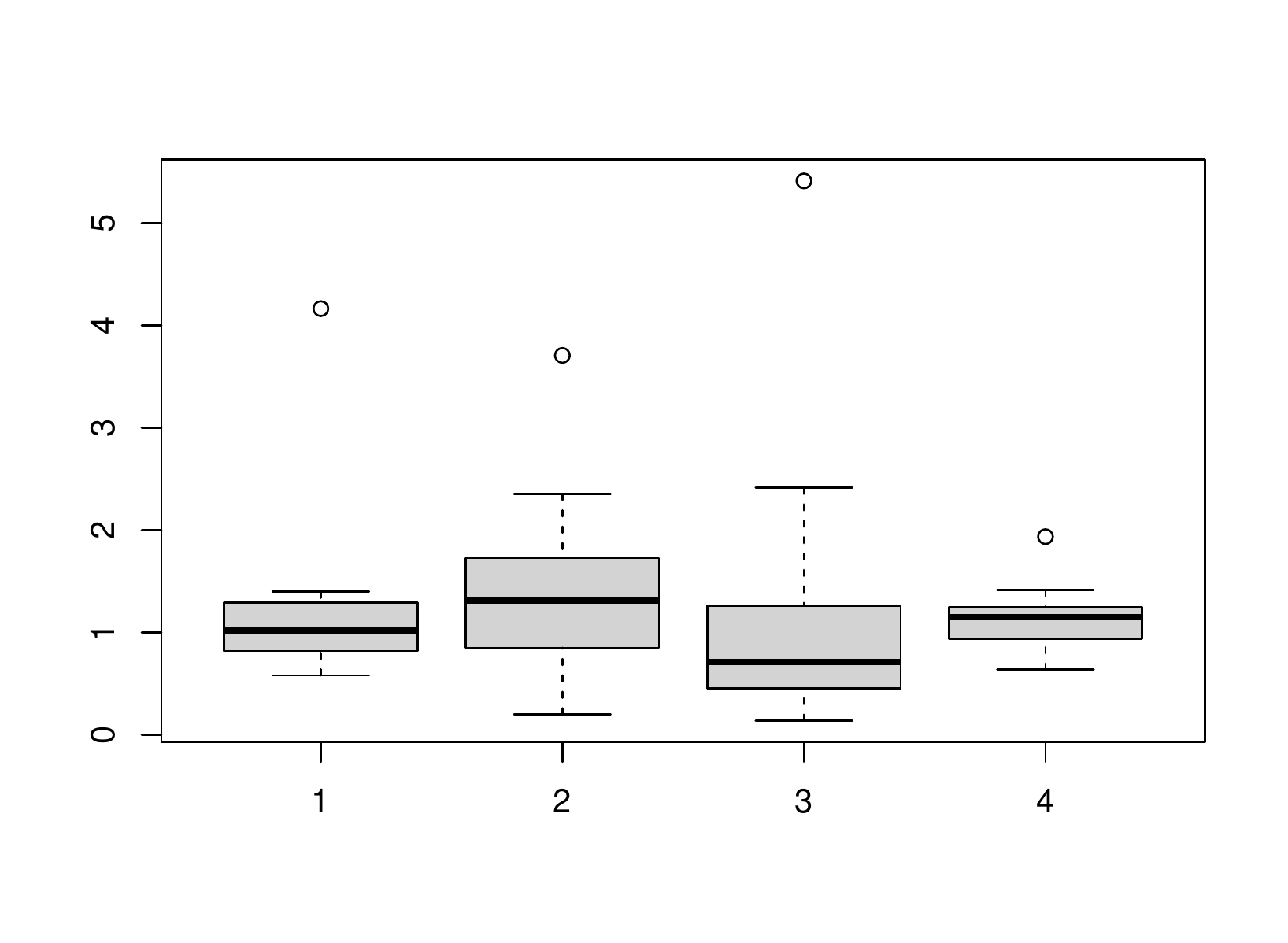}
  \label{fig:2b}
\end{subfigure}\hfil 
\begin{subfigure}{0.25\textwidth}
  \includegraphics[width=\linewidth]{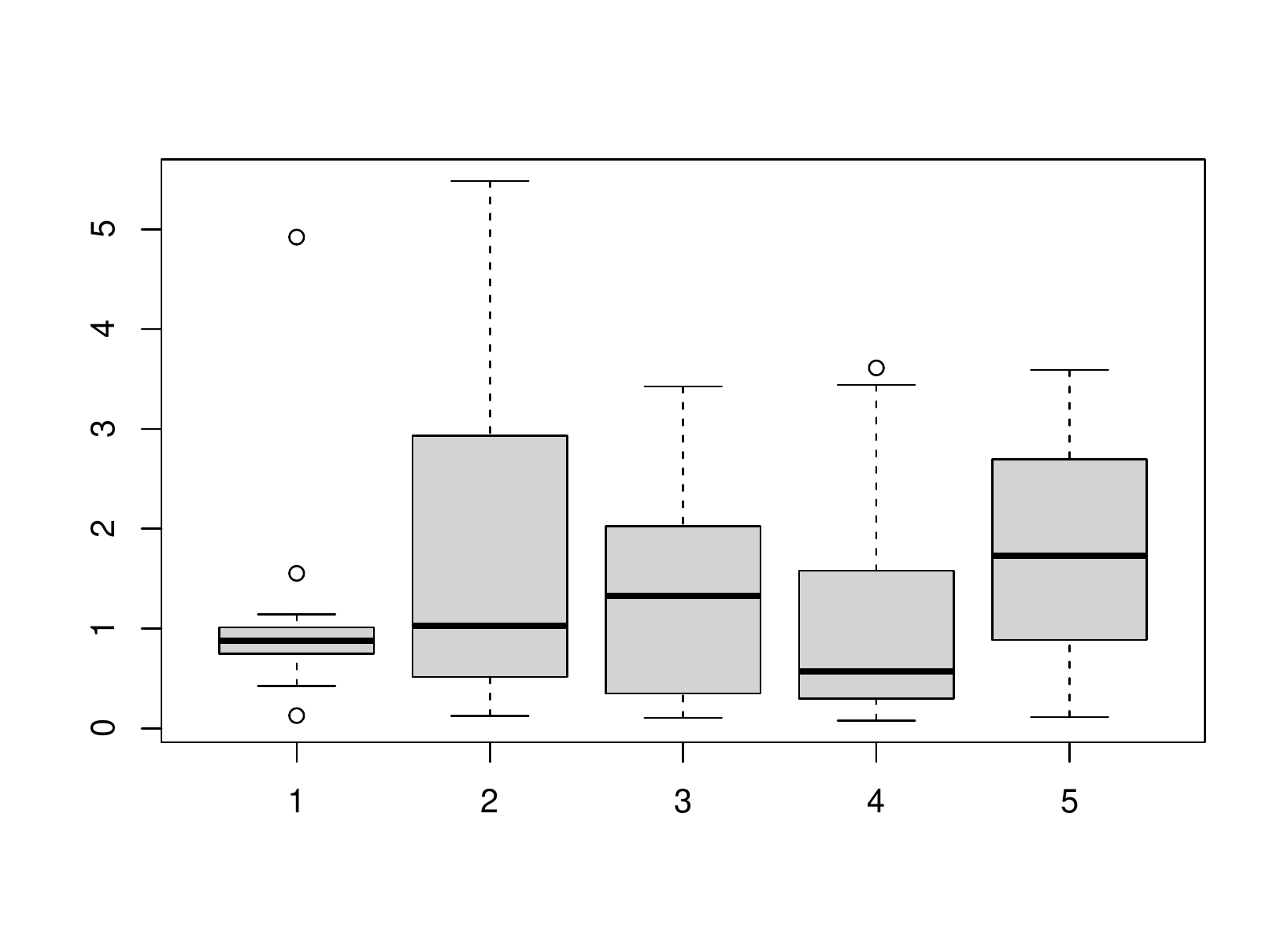}
  \label{fig:3b}
\end{subfigure}

\medskip
\begin{subfigure}{0.25\textwidth}
  \includegraphics[width=\linewidth]{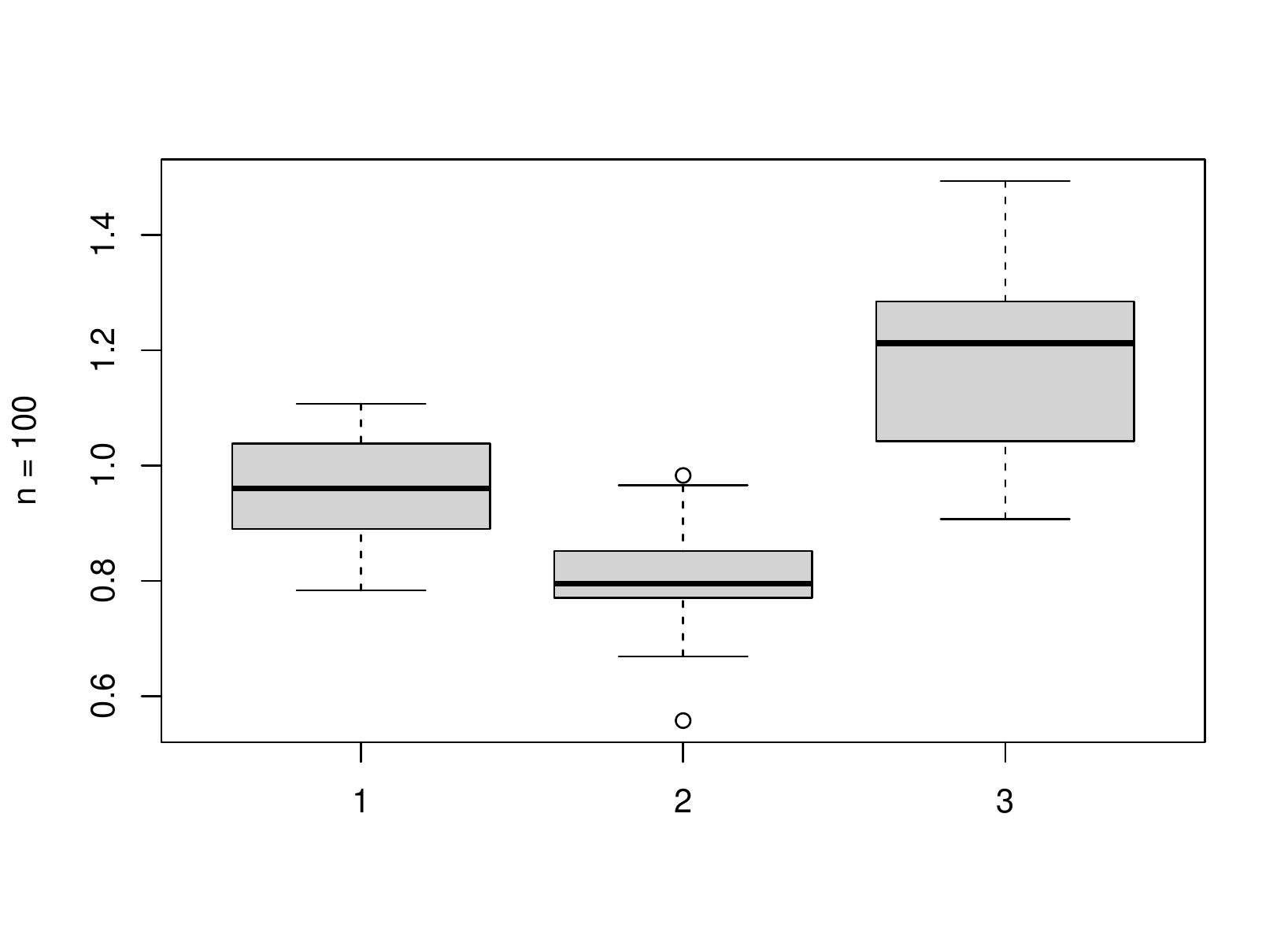}
  \label{fig:4b}
\end{subfigure}\hfil 
\begin{subfigure}{0.25\textwidth}
  \includegraphics[width=\linewidth]{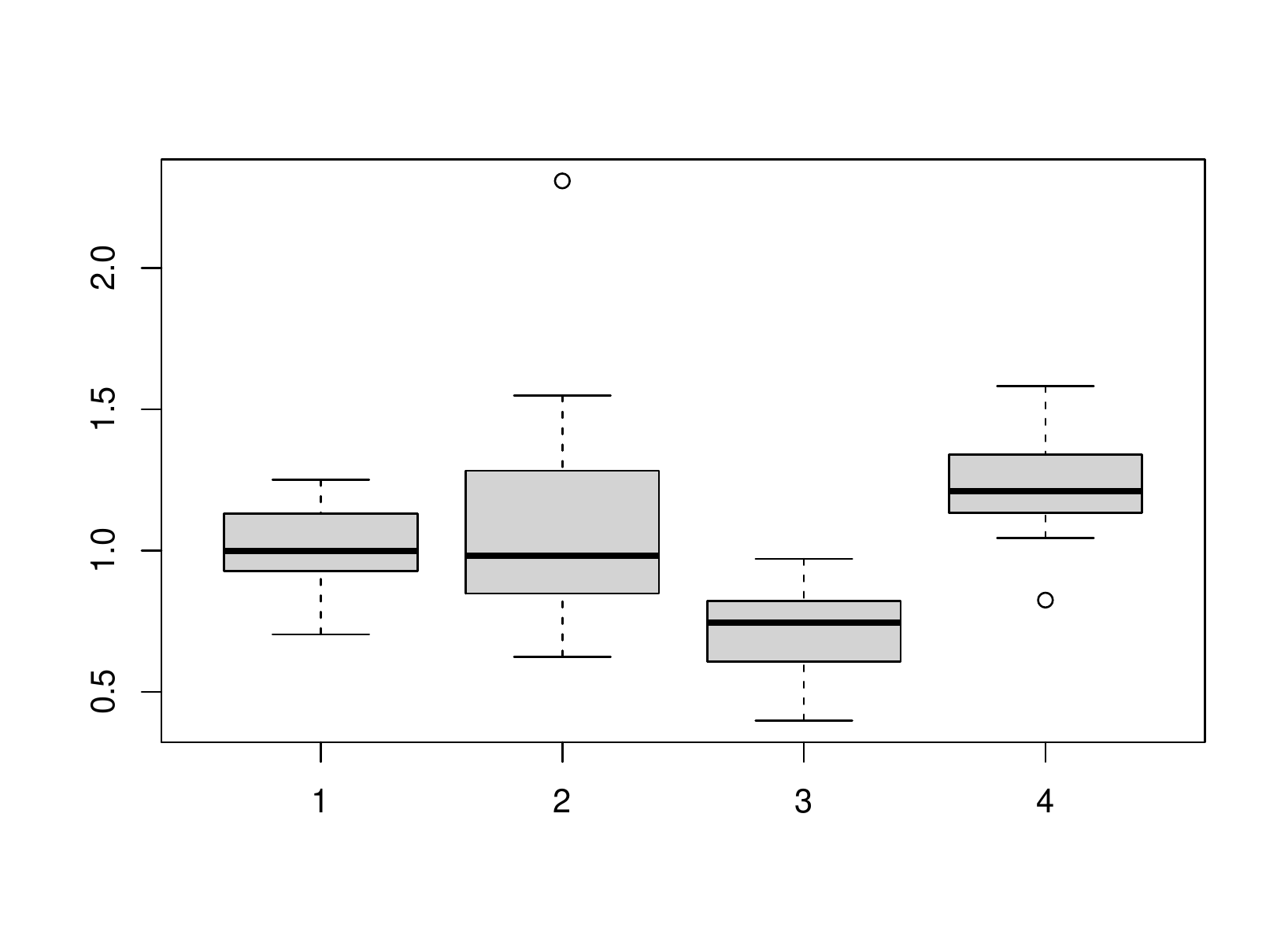}
  \label{fig:5b}
\end{subfigure}\hfil 
\begin{subfigure}{0.25\textwidth}
  \includegraphics[width=\linewidth]{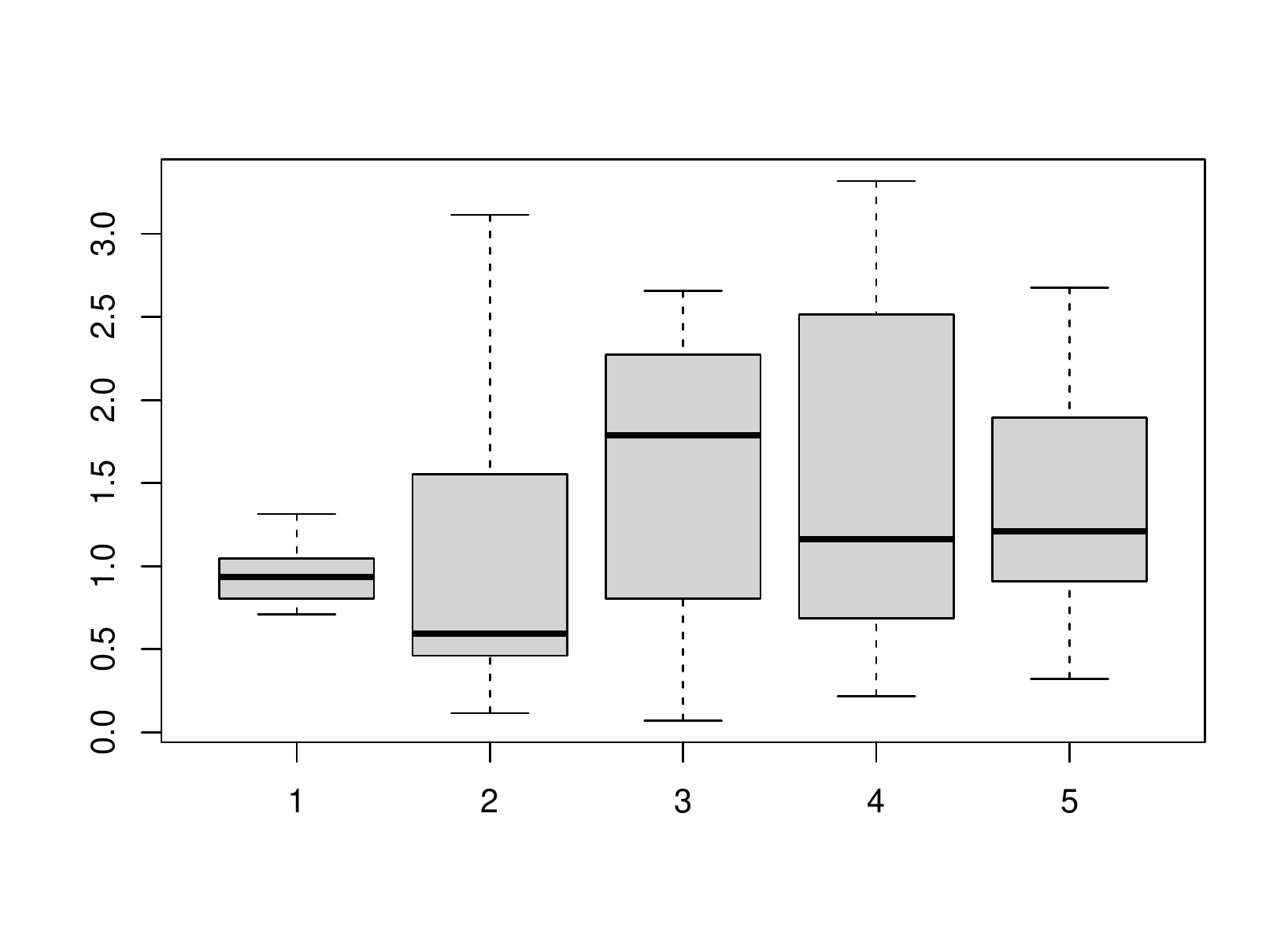}
  \label{fig:6b}
\end{subfigure}

\medskip
\begin{subfigure}{0.25\textwidth}
  \includegraphics[width=\linewidth]{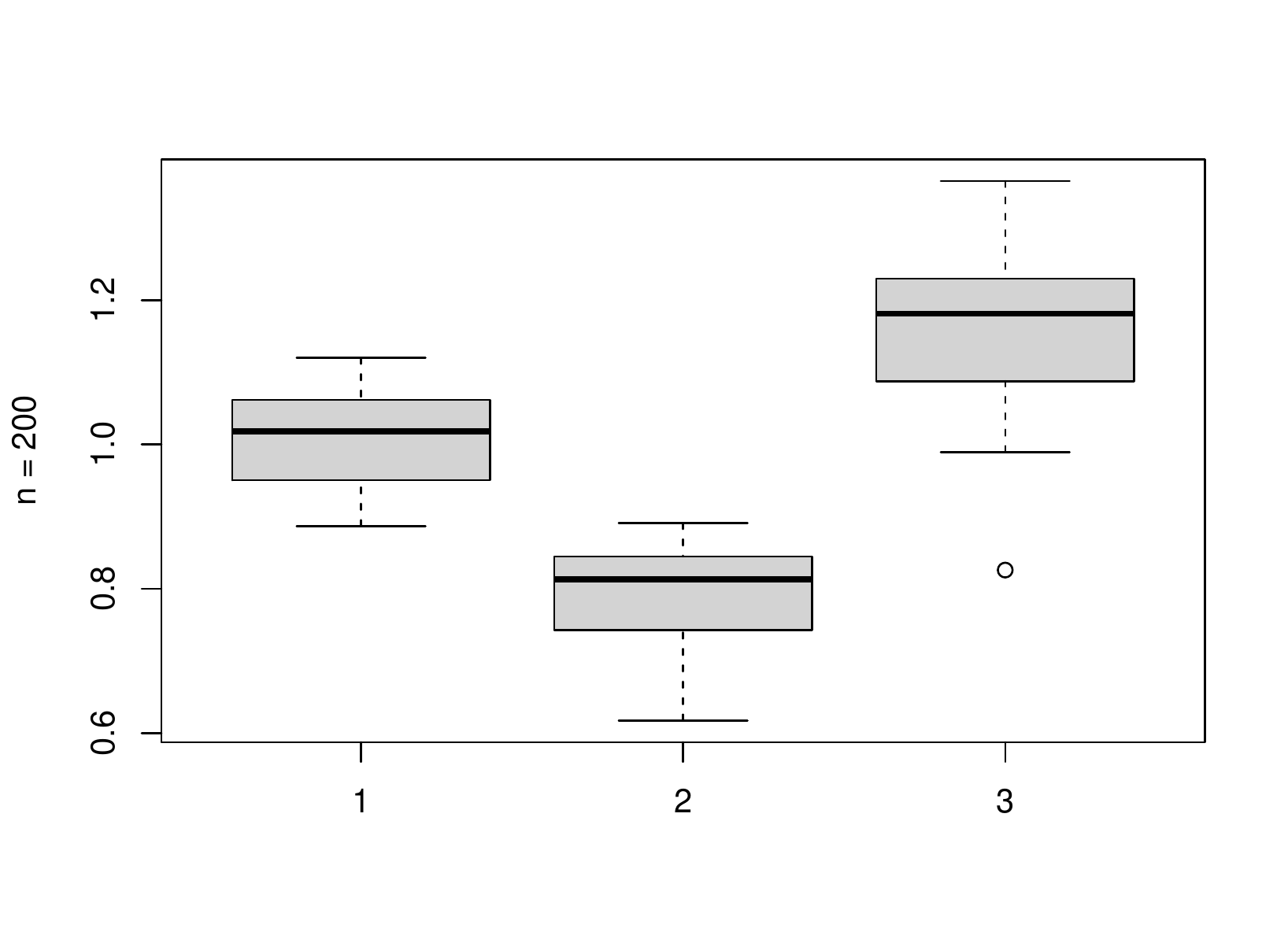}
  \label{fig:7b}
\end{subfigure}\hfil 
\begin{subfigure}{0.25\textwidth}
  \includegraphics[width=\linewidth]{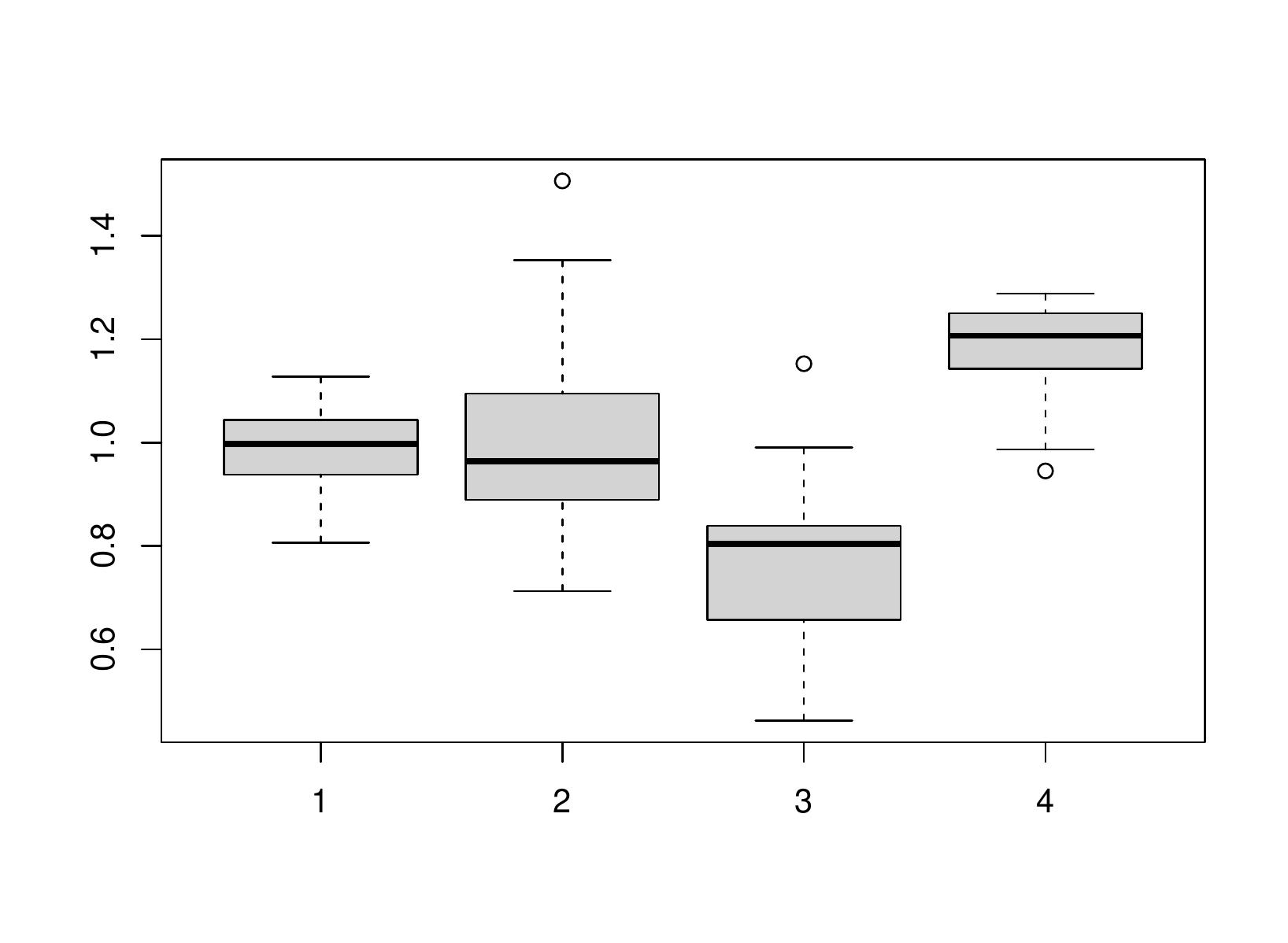}
  \label{fig:8b}
\end{subfigure}\hfil 
\begin{subfigure}{0.25\textwidth}
  \includegraphics[width=\linewidth]{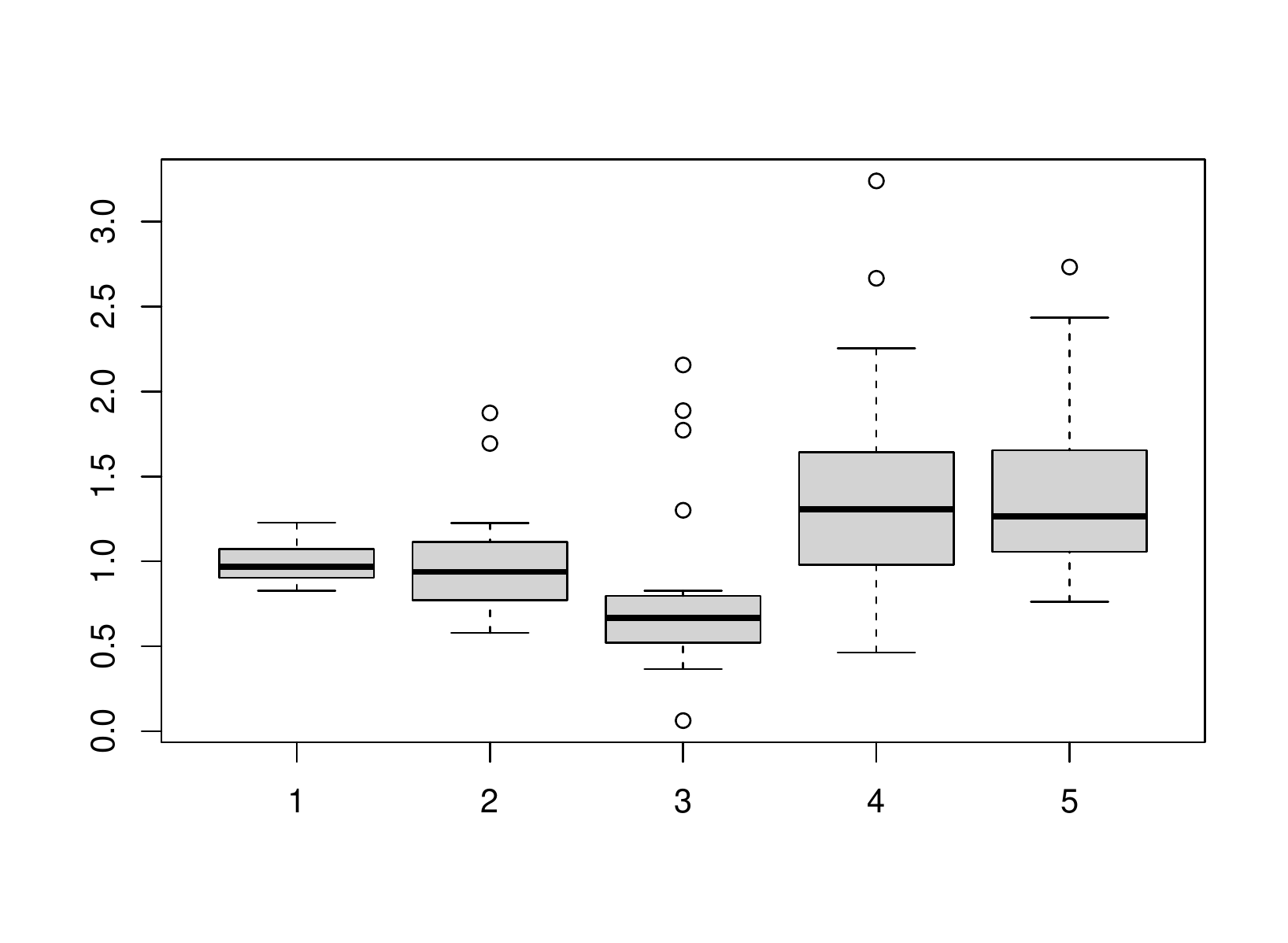}
  \label{fig:9b}
\end{subfigure}
\caption{Boxplots of posterior standard deviations of the means $\pmb\sigma$, using the proposed prior, for samples of size $n=(50,100,200)$, plotted by row, and number of components $k=(3,4,5)$, plotted by column.}
\label{fig:sds}
\end{figure}

\subsection{Real data analysis}
In this section we analyse the well--known galaxy data set, which contains the velocity of 82 galaxies in the Corona Borealis region. To support a particular theory about the formation of galaxies, the analysis aims to estimate the number of stellar populations. This is a benchmark data set, well investigated in the literature, for example in \cite{EscobWest:1995}, \cite{RichGreen:1997} and \cite{Grazianetal:2019}, among others. We consider the galaxies velocities as random variables distributed according to a mixture of $k$ normal densities. The estimation of the number of components has proved to be delicate, with estimates ranging from 3 to 7, depending on factors, such as the priors for the parameters and the Bayesian method used. The histogram of the data set with a superimposed density is presented in Fig.~\ref{fig:galaxy}.

\begin{figure}[htb]
\centering
\includegraphics[width=14cm,height=6cm]{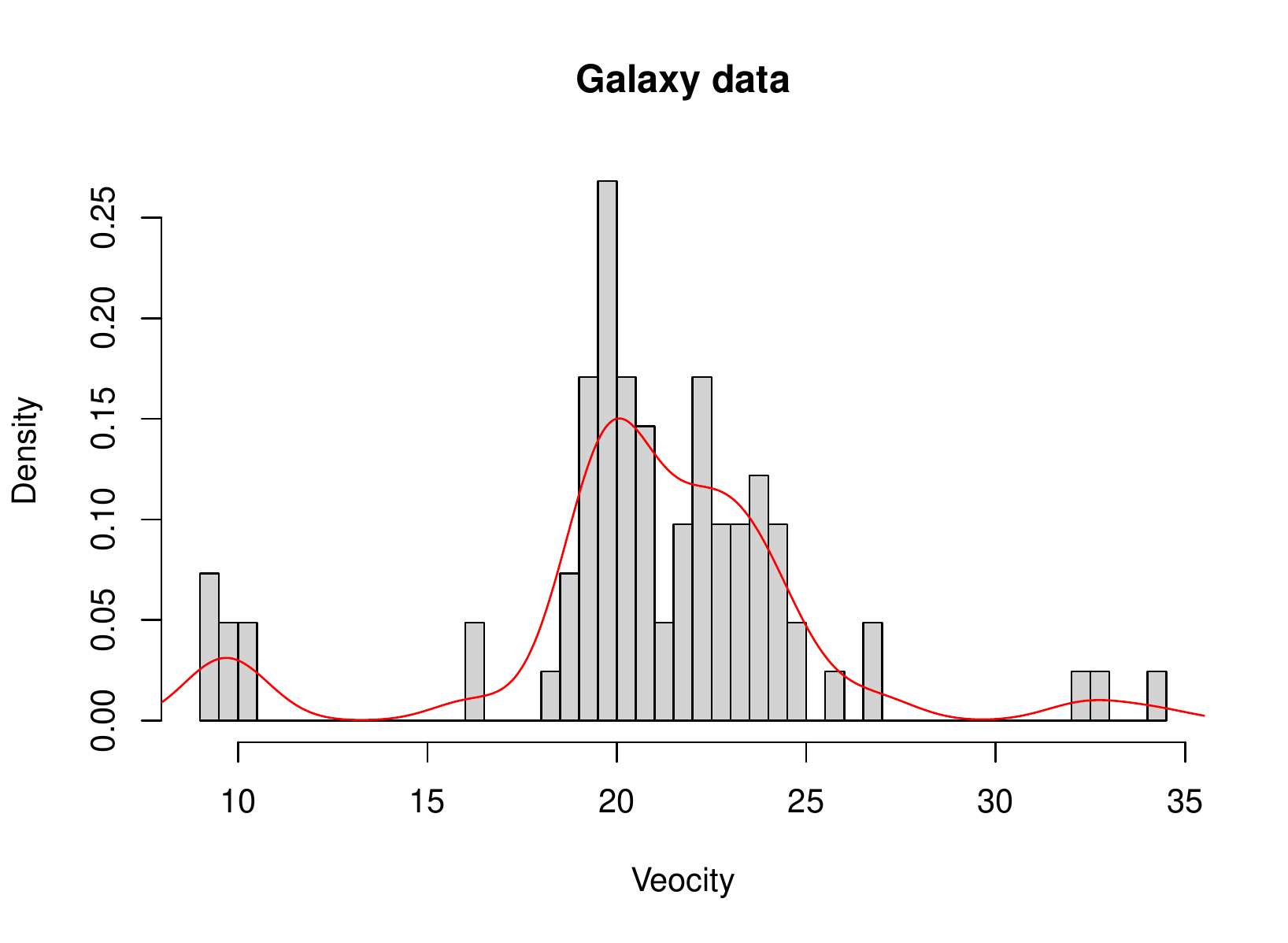}
\caption{Histogram of the Galaxy data set with a smoothed density (red curve) superimposed.}
\label{fig:galaxy}
\end{figure}

To select the number of components in the mixture of normal densities, we have fitted models with $k=(2,3,4,5,6,7,8)$ components, computing the Deviance Information Criterion (DIC) under each model. The results are reported in Table \ref{tab:galaxy}. We notice that, according to the computed index, we identify as best model the mixture with $k=4$ components, which is in line with  \cite{Grazianetal:2019}, and slightly more conservative than \cite{RichGreen:1997} and \cite{Grazianetal:2019}, where the number of components with non-zero weight is 5. Table \ref{tab:galaxyestimates} shows the posterior means for the parameters of the 4 components estimated.

\begin{table}[htb]
\centering
\begin{tabular}{c|ccccccc}
\hline 
$k$ & 2 & 3 & 4 & 5 & 6 & 7 & 8 \\ 
DIC & 476.72 & 425.20 & 371.48 & 413.68 & 446.45 & 446.54 & 458.81 \\ 
\hline 
\end{tabular} 
\caption{Deviance Information Criterion for mixture with number of components $k=(2,3,4,5,6,7,8)$.}
\label{tab:galaxy}
\end{table}

\begin{table}[htb]
\centering
\begin{tabular}{c|ccc}
\hline 
Component & $\omega$ & $\mu$ & $\sigma$ \\ 
\hline 
1 & 0.32 & 19.16 & 0.71 \\ 
2 & 0.31 & 23.59 & 3.46 \\ 
3 & 0.29 & 22.01 & 4.59 \\ 
4 & 0.10 & 9.31 & 0.58 \\ 
\hline 
\end{tabular} 
\caption{Posterior means of the parameters for the $k=5$ components. The results are reported by descending order of the weights.}
\label{tab:galaxyestimates}
\end{table}

\section{Variance parameters in hierarchical models}\label{sc_hierarchical}
In this section we discuss a well-known implementation of a hierarchical model that is proposed, for example,  in \cite{Gelman:2006}. The basic two-level hierarchical model is as follows:
\begin{eqnarray*}
y_{ij}&\sim& N(\mu+\alpha_j,\sigma_y^2),  \qquad i=1,\ldots,n,\,\,j=1,\ldots,J\\
\alpha_j&\sim& N(0,\sigma_\alpha^2), \qquad j=1,\ldots,J.
\end{eqnarray*}
This model has three parameters, namely $\mu$, $\sigma_y$ and $\sigma_\alpha$. However, out interest for this paper is in $\sigma_\alpha$ only, noting that ``regular'' objective priors can be used on the remaining parameters, like $\pi(\mu,\sigma_y)\propto1$, for example. Although being improper, this prior yields a proper posterior on the parameters.

The actual concern is on the variance (scale) parameter $\sigma_\alpha$, as if we were to put an improper prior on it, then the corresponding posterior, most likely, would be improper as well.  To compare the proposed prior, we assign an inverse-gamma prior on the variance with parameters set so to define a very sparse probability distribution. This is recommended, for example, in \cite{Spieg:1994}, where the prior is $\pi(\sigma_\alpha^2)\sim IG(\varepsilon,\varepsilon)$, with $\varepsilon>0$ sufficiently small. We do not discuss in detail the appropriateness of the above choice, or other alternatives; the reader can refer to \cite{Gelman:2006}, for example, for a through discussion.

The data consists of $J=8$ educational testing experiments, where the parameters $\alpha_1,\ldots,\alpha_8$ represent the relative effects of Scholastic Aptitude Test coaching programs in different schools. In this example, the parameter $\sigma_\alpha$ represents the between-schools variability (standard deviation) of the effects.  Table \ref{Tab:schools} shows the data.

\begin{table}[h]
\centering
\begin{tabular}{c|cc}
\hline 
School & $y_j$ & $\sigma_j$ \\ 
\hline 
A & 28 & 15 \\ 
B & 8 & 10 \\ 
C & -3 & 16 \\ 
D & 7 & 11 \\ 
E & -1 & 9 \\ 
F & 1 & 11 \\ 
G & 18 & 10 \\ 
H & 12 & 18 \\ 
\hline 
\end{tabular}
\caption{Observed effects $(y_j)$ of special preparation on SAT scores on eight randomised experiments, where $\sigma_j$ are the standard errors of effect estimate.}
\label{Tab:schools} 
\end{table}

We have compared the marginal posteriors $\pi(\sigma_\alpha^2|\mathbf{y})$ obtained by using the inverse-gamma prior with $\varepsilon=1$ and the proposed prior in \eqref{eq_prior2} with $a=1$. The histograms of the marginal posteriors are in Figure \ref{fig:schools}, where we note similar results. The statistics of the posteriors are reported in Table \ref{tab:schools_poststas}, where we note a less-informative distribution when the proposed prior is employed. This is expected, as the inverse-gamma distribution is considered a relatively informative one \citep{Gelman:2006}.

\begin{figure}[h]
\centering
\begin{subfigure}{0.45\textwidth}
  \includegraphics[width=\linewidth,height=6cm]{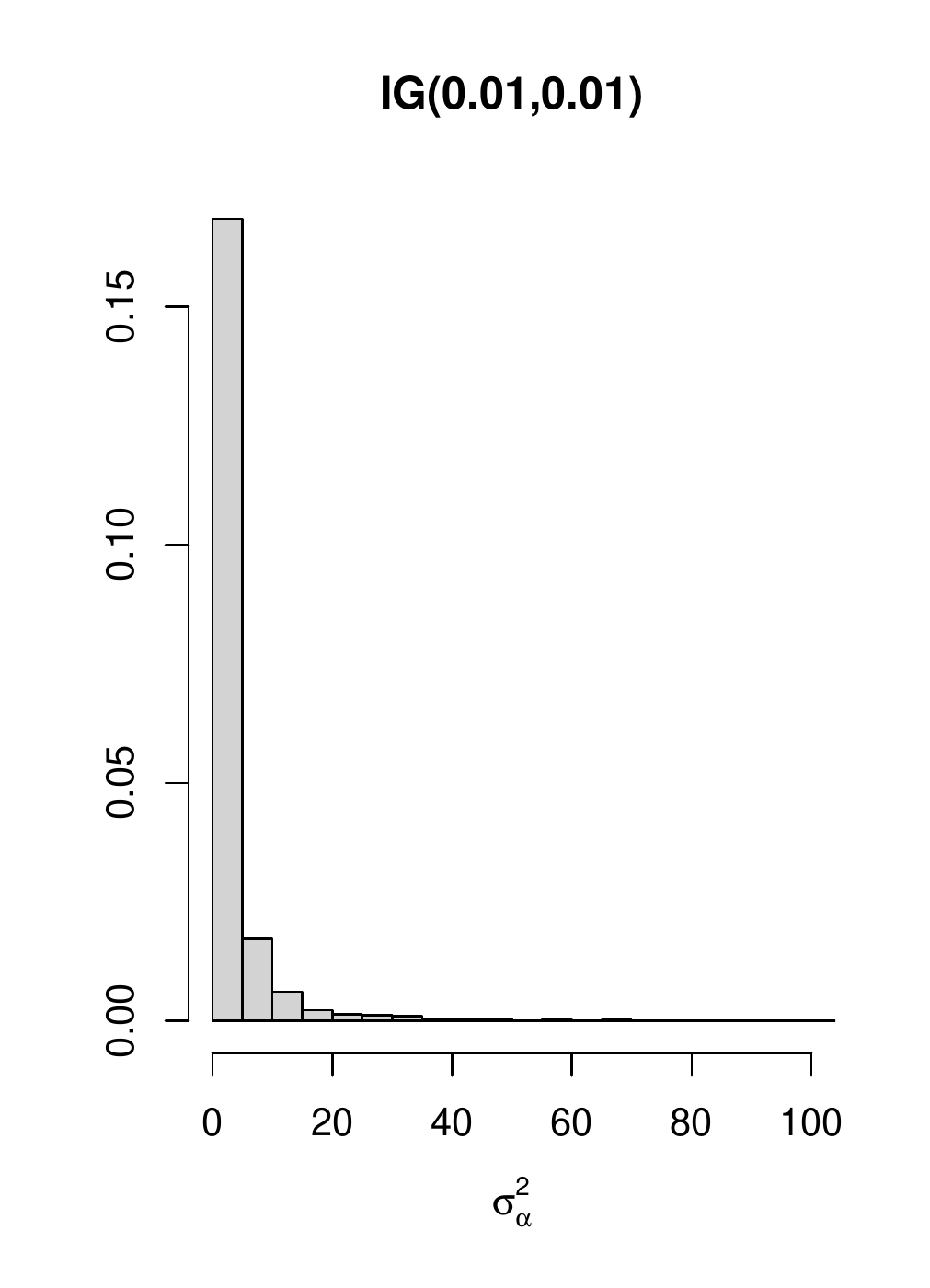}
  \label{fig:1a}
\end{subfigure}
\begin{subfigure}{0.45\textwidth}
  \includegraphics[width=\linewidth,height=6cm]{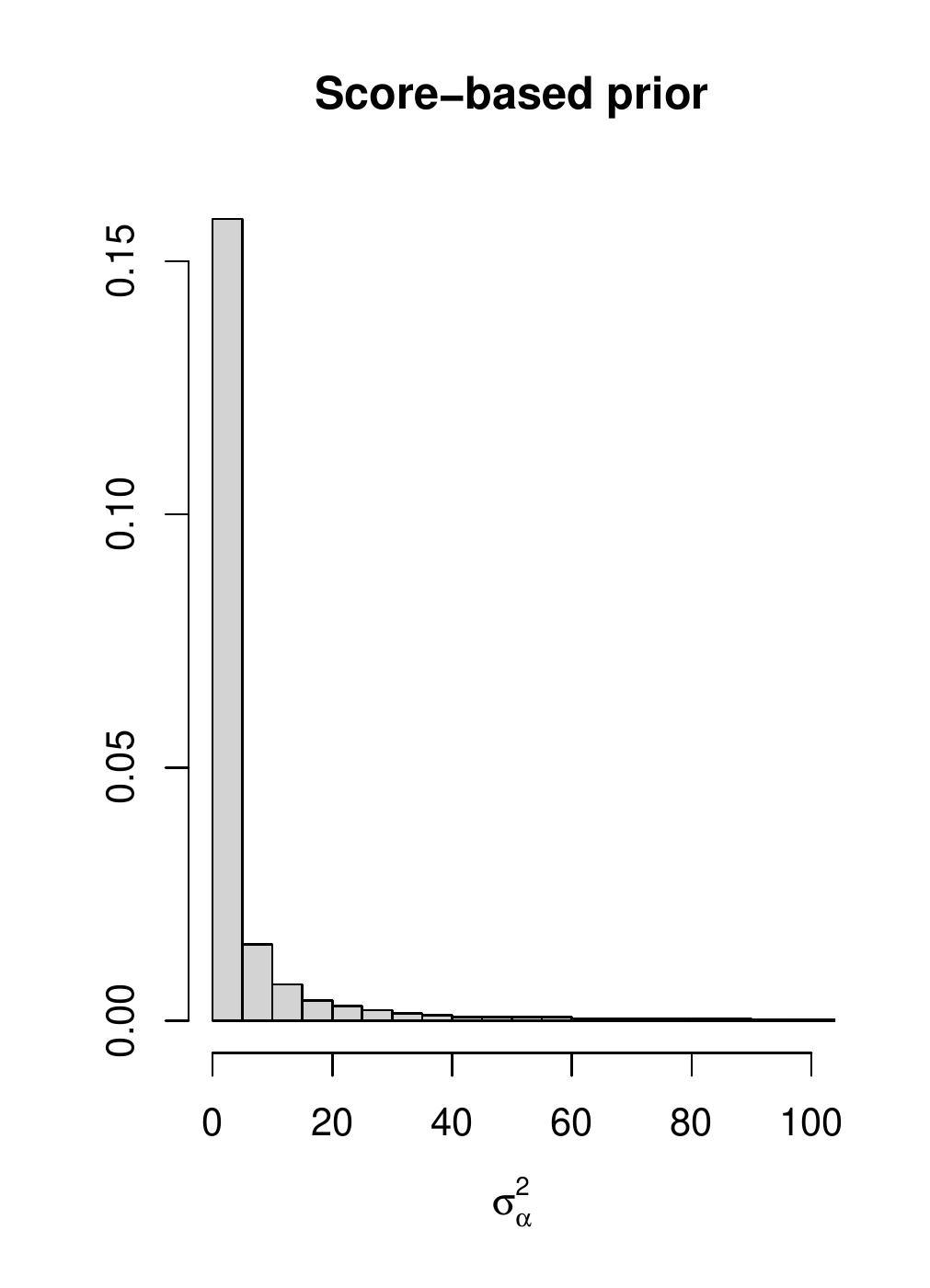}
  \label{fig:2a}
\end{subfigure}
\caption{Histogram of the marginal posterior for $\sigma_\alpha^2$ for the School problem when using an inverse-gamma prior (left) and the proposed prior based on scores (right).}
\label{fig:schools}
\end{figure}

\begin{table}[h]
\centering
\begin{tabular}{c|cc}
\hline 
Prior & Mean & 95\% C.I. \\ 
\hline 
Inverse-Gamma & 3.8 & (0.3, 23.9) \\ 
Score-based & 2.3 & (0.2, 57.8) \\ 
\hline 
\end{tabular}
\caption{Posterior statistics for the marginal distribution of $\sigma_\alpha^2$ for the schools problem.}
\label{tab:schools_poststas}
\end{table}

\section{Discussion}\label{sc_discussion}
In this paper we have derived a class of objective prior distributions that have the appealing properties of being proper and heavy-tailed. These have been obtained by exploiting a straightforward approach to the construction of score functions (here proposed). In detail, using convex function $\alpha(\cdot)$ we can find the score function with first two derivatives using \eqref{ascore}.  The Hyv\"{a}rinen score arises with $\alpha(u)=u^2$; whereas we have used $\alpha(u)=u^{-2}$ and used it to construct objective prior distributions using methodology  introduced in \cite{LVW:2020}.

The class of prior is heavy-tailed, behaving as $1/x^2$ for large $|x|$; this result is immeditely obvious as the prior on $(0,\infty)$ is a Lomax distribution with shape 1. In this respect, it behaves similar to standard objective priors but comes without the problems of being improper. The benefits of using a proper prior is that the posterior is automatically proper and so does not need to be checked.

We have showed that, when compared to Jeffreys prior on simulated data, the frequentist performances of the prior distribution derived from score functions are nearly equivalent. In addition, we have showed that, on both simulated and real data, the proposed prior is suitable to be used in a key scenario where improper priors (e.g. Jeffreys and reference) are not suitable (or are yet to be found). We have also illustrated the prior on a common problem for hierarchical models, that is assigning an objective prior for the variance parameter. 

As a final point, we briefly discuss the case where a prior is needed on a multidimensional parameter space. So, say we have a model with $k$ parameters, that is $\pmb\theta=(\theta_1,\ldots,\theta_k)$, where $\theta_j\in\Theta_j$, for $j=1,\ldots,k$. We also assume that the uni-dimensional space for each parameter is either $(0,\infty)$ or $(-\infty,\infty)$. Assuming $k$ relatively large, besides some specific statistical models such as regression models or graphical models, a common practice to assign objective priors on $\pmb\theta$ is as follows:
$$\pi(\pmb\theta) = \pi_1(\theta_1)\times\cdots\times\pi_k(\theta_k).$$
In other words, parameters are assumed to be independent a priori, so the join prior distribution is represented by the product of the marginal priors on each parameter. We can then set $\pi_j(\theta_j)$ to be either \eqref{eq_prior2} or \eqref{eq:cases}, for $j=1,\ldots,k$, depending on $\Theta_j$.


\appendix
\section{Appendix} 

\subsection{Information and Bregman divergence}\label{sc_informationandbregman}

Information for a density function is assumed to be convex, based on the obvious notion that averaging reduces information \citep{Topsoe:2001}. Hence, if we write $I(p)=\int \phi(p)$, then $\phi$ is assumed convex. 
For example, when $\phi(p)=p\log p$, $\phi$ is convex, as well as when $\phi(p,p')=(p')^2/p$, being convex in $(p,p')$.

With a convex function we can set up a Bregman divergence. First,  recall that a Bregman divergence (or Bregman distance) is a measure of the distance between two points, and it is defined in terms of a convex function. In particular, when the two points are probability distributions, then we have a statistical distance. Probably the most elementary Bregman distance is the (squared) Euclidean distance.

Let us start by considering the case $m=2$, so $\phi$ is a function of $(p,p')$.
Let $u$ and $v$ be vectors in $\mathbb{R}^d$. The Bregman divergence with convex function $\phi:\mathbb{R}^d\to\mathbb{R}$ is given by
$$B_\phi(u,v)=\phi(u)-\phi(v)-\langle \nabla \phi(v),u-v\rangle,$$
where the right most term is
$$\sum_{j=1}^d \frac{\partial \phi(v)}{\partial v_j}(u_j-v_j).$$
So $B_\phi(u,v)\geq 0$, and equality holds if and only if $u=v$.

In this work, we will focus on such divergences where the arguments are probability density functions \citep{StumVaj:2012}. Hence, in the one dimensional case, with $\phi:\mathbb{R}_+\to\mathbb{R}$ a convex function and $p(x)$ a density function, define
\begin{equation}\label{onebreg}
B_\phi(p(x),q(x))=\phi(p(x))-\phi(q(x))-\frac{\partial \phi(q)}{\partial q}(x)\,\,(p(x)-q(x)),
\end{equation}
and the divergence between $p$ and $q$ defined as
\begin{equation}\label{bdiv}
D(p,q)=\int B_\phi(p(x),q(x))\,\d x.
\end{equation}
For example, if $\phi(p)=p\,\log p$, which is convex in $p$, then
\begin{eqnarray*}
D(p,q) &=& \int p\log p-\int q\log q-\int (1+\log q)\,(p-q) \\
&=& \int p\,\log(p/q),
\end{eqnarray*}
which is the well known Kullback--Leibler divergence. The $L_2$ divergence arises when $\phi(p)=p^2$, with $B_\phi(p,q)=(p-q)^2$.

Note that we can write
$$D(p,q)=\int \phi(p)-\int p\left[\frac{\partial \phi}{\partial q}-\int \left\{q\frac{\partial \phi}{\partial q}-\phi(q) \right\}\right],$$
and so we see that
$$S(q)=\frac{\partial \phi}{\partial q}-\int \left\{q\frac{\partial \phi}{\partial q}-\phi(q) \right\}$$
is a score function, the Bregman score, and is local if $\phi$ satisfies $q\,\partial \phi/\partial q=\phi(q)$.
In this case the only solution is the log-score. As mentioned above, the log-score is the sole local and proper score function, that is depending on $q$ only. So, the Bregman divergence and the Bregman score confirm this, together with the results in \cite{Bern:1979}. We will see this is also the case with order $m=2$.  

The extension to the Bregman divergence and the Bregman score that we discuss in this paper, follows from a two-dimensional Bregman divergence with arguments $(p,p')$, where $p'(x)=dp/dx$, and for some two-dimensional convex function $\phi(u,v)$. It is defined by
\begin{equation}\label{breg2dim}
B_\phi(p(x),q(x))=\phi(p(x),p'(x))-\phi(q(x),q'(x))-\frac{\partial \phi}{\partial q}(x)\,(p(x)-q(x))-
\frac{\partial \phi}{\partial q'}(x)\,(p'(x)-q'(x)).
\end{equation}
For example, if $\phi(u,v)=v^2/u$, which is easily shown to be convex, we get
$$D(p,q)=\int p\left(\frac{p'}{p}-\frac{q'}{q}\right)^2,$$
known as the Fisher information divergence; see, for example, \cite{Villani:2008}.\\

\subsection{Divergences and scores}\label{sc_divergences}

The two most well known measures of information are differential Shannon and Fisher. See, for example, \cite{MacKay:2003}. Associated divergences are the Kullback--Leibler and Fisher, respectively, with corresponding score functions the logarithmic and Hyv\"{a}rinnen. The connection between divergence, information and score functions can be understood from the following;
\begin{equation}\label{divinfsc}
D(p,q)=I(p)+\int p \,S(q).
\end{equation}
Here $D$ denotes divergence, $I$ information and $S$ score. From this it is clear that $\int pS(q)$ is minimized at $q=p$. For \eqref{divinfsc} based on the  Kullback--Leibler divergence, we have
$$D(p,q)=\int p\,\log(p/q),\quad I(p)=\int p\log p\quad\mbox{and}\quad S(q)=-\log q.$$
For \eqref{divinfsc} based on Fisher information, we have
\begin{equation}\label{gendiv}
D(p,q)=\int p\left(p'/p-q'/q\right)^2,\quad I(p)=\int (p')^2/p\quad\mbox{and}\quad S(q)=2q''/q-(q'/q)^2.
\end{equation}
From \eqref{onebreg} we get
$$D(p,q)=\int \phi(p)+\int p\left\{-\frac{\partial \phi}{\partial q}+\int \left[q\,\frac{\partial \phi}{\partial q}-\phi(q)\right]\right\},$$
so 
$$S(q)=-\frac{\partial \phi}{\partial q}+\int \left[q\,\frac{\partial \phi}{\partial q}-\phi(q)\right].$$
More generally, using the two dimensional Bregman divergence in \eqref{breg2dim}, and put it in \eqref{bdiv}, we obtain
$$D(p,q)=\int \phi(p,p')+\int p\,S(q,q',q''),
$$
where the score is
\begin{equation}\label{scores}
S(q,q',q'')=-\frac{\partial\phi}{\partial q}+\frac{d}{dx}\frac{\partial \phi}{\partial q'}-\left\{\int \phi(q,q')-\int q\,\frac{\partial \phi}{\partial q}-\int q'\frac{\partial\phi}{\partial q'}\right\}.
\end{equation}
The score $S(q)$ has been obtained by implementing integration by parts, and assuming $[p\cdot \partial \phi/\partial q']$ vanishes at the boundary points. To ensure the score is local, we use the following condition on $\phi$,
\begin{equation}\label{condphi}
\phi(u,v)=u\,\frac{\partial \phi}{\partial u}+v\,\frac{\partial\phi}{\partial v}.
\end{equation}
Hence, the proposed class of score functions, which effectively is the class introduced in \cite{Parry:2012}, is given by 
\begin{equation}\label{score}
S(q,q',q'')=-\frac{\partial\phi}{\partial q}+\frac{d}{dx}\frac{\partial \phi}{\partial q'}.
\end{equation}

The derivation just presented is arguably much simpler to the one used in \cite{Parry:2012}, as we have avoided any variational analysis and the use of differential operators.

For a specific example, consider the class of convex functions, satisfying \eqref{condphi}, given by
\begin{equation}\label{diva}
\phi(u,v)=u\,\alpha(v/u),
\end{equation}
for some convex function $\alpha:\mathbb{R}\to\mathbb{R}$.
The convexity of $\alpha$ implies the convexity of function $\phi$ in \eqref{diva}, as the following Lemma \ref{lem_convex} shows.

\begin{lemma}\label{lem_convex}
The function $\phi(u,v)=u\alpha(v/u)$ is convex when $\alpha$ is convex.
\end{lemma}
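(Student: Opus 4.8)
The plan is to verify convexity of $\phi(u,v)=u\,\alpha(v/u)$ directly from the definition of convexity of $\alpha$, exploiting the fact that $\phi$ is the \emph{perspective function} of $\alpha$. Recall that for a convex function $\alpha$, the perspective $g(u,v)=u\,\alpha(v/u)$ (defined for $u>0$, say, or on whichever half-plane keeps $v/u$ in the domain where $\alpha$ is convex) is convex; this is a classical fact, but since the paper is aiming to keep the mathematics elementary and self-contained, I would give the short direct argument rather than merely cite it.

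First I would fix two points $(u_1,v_1)$ and $(u_2,v_2)$ with $u_1,u_2>0$ and a weight $\lambda\in[0,1]$, write $\bar u=\lambda u_1+(1-\lambda)u_2$ and $\bar v=\lambda v_1+(1-\lambda)v_2$, and introduce the weights $t=\lambda u_1/\bar u\in[0,1]$, so that $1-t=(1-\lambda)u_2/\bar u$. The key algebraic observation is that
$$
\frac{\bar v}{\bar u}=t\,\frac{v_1}{u_1}+(1-t)\,\frac{v_2}{u_2},
$$
i.e.\ the ratio at the convex combination of the points is itself a convex combination (with the reweighted coefficient $t$) of the two ratios. Applying convexity of $\alpha$ to this identity and then multiplying through by $\bar u>0$ gives
$$
\bar u\,\alpha\!\left(\frac{\bar v}{\bar u}\right)\le \bar u\left[t\,\alpha\!\left(\frac{v_1}{u_1}\right)+(1-t)\,\alpha\!\left(\frac{v_2}{u_2}\right)\right]
=\lambda u_1\,\alpha\!\left(\frac{v_1}{u_1}\right)+(1-\lambda)u_2\,\alpha\!\left(\frac{v_2}{u_2}\right),
$$
which is exactly $\phi(\bar u,\bar v)\le \lambda\phi(u_1,v_1)+(1-\lambda)\phi(u_2,v_2)$, the desired convexity inequality. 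The case where the arguments lie in the region $u<0$ (relevant for decreasing densities, where one works with $u=-q'/q$ or uses the symmetrised prior) is handled by the same computation with the inequality direction tracked carefully, or by noting the sign conventions already adopted in the paper; alternatively one restricts, as the theorem statement does, to a fixed sign of the ratio $\xi=v/u$.

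The only genuinely delicate point is bookkeeping about domains and signs: $\phi$ is convex on each open half-plane $\{u>0\}$ and $\{u<0\}$ separately (the reweighting trick needs $\bar u$ to have a definite sign so that multiplying the $\alpha$-inequality by $\bar u$ preserves its direction), and one should not claim joint convexity across $u=0$. I would state the lemma's scope accordingly and remark that for the application $\alpha(\xi)=\xi^{-2}$ one has $\phi(u,v)=u^3/v^2$, whose convexity on the relevant region can also be checked by a one-line Hessian computation as a sanity check. If a twice-differentiability argument is preferred instead, an alternative proof computes the $2\times 2$ Hessian of $\phi$ and shows it equals $\alpha''(v/u)\,\cdot\,\tfrac{1}{u}\binom{-v/u}{1}\binom{-v/u}{1}^{\!\top}$, a rank-one positive semidefinite matrix when $\alpha''\ge 0$ and $u>0$; I would mention this as the quick route for readers comfortable with Hessians, but lead with the definition-based argument since it needs no smoothness assumption on $\alpha$.
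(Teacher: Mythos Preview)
Your proof is correct, and in fact more careful than the paper's. The paper takes precisely the Hessian route you sketch at the end as your ``quick alternative'': it writes down the $2\times 2$ Hessian of $\phi$, obtains the rank-one form
\[
\frac{\alpha''(v/u)}{u}
\begin{pmatrix}
(v/u)^2 & -v/u\\
-v/u & 1
\end{pmatrix},
\]
and concludes positive semidefiniteness from $\alpha''\ge 0$ (with $u>0$). Your primary argument---recognising $\phi$ as the perspective of $\alpha$ and running the reweighting $t=\lambda u_1/\bar u$ to reduce to convexity of $\alpha$---is a genuinely different derivation. It buys you two things the paper's proof does not: it works without assuming $\alpha$ is twice differentiable, and it forces explicit attention to the domain issue (convexity holds on each half-plane $\{u>0\}$ and $\{u<0\}$ separately, not across $u=0$), a point the paper leaves implicit. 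The Hessian computation you offer as a sanity check is exactly the paper's whole proof, so nothing is lost by leading with the perspective argument.
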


\begin{proof}
The Hessian matrix corresponding to $\phi$ is seen to be
$$
\frac{\alpha''(v/u)}{u}\left(
\begin{matrix}
(v/u)^2  & v/u \\ \\
-v/u & 1
\end{matrix}
\right),
$$
which can be shown to be positive definite when $\alpha''>0$. Given that $\alpha$ is assumed to be convex, then the condition $\alpha''>0$ is true.
\end{proof}

\vspace{0.2in}
\noindent
Given the above form for $\alpha$, we are now able to find the divergence, the information and the score. However, before proceeding, we first note that
$$-\phi(q,q')+q\,\frac{\partial \phi(q,q')}{\partial q}+q'\,\frac{\partial \phi(q,q')}{\partial q'}=0.$$
That is, $\phi$ satisfies the condition in \eqref{condphi}.
In fact, we have
\begin{equation}\label{cond}
-u\,\alpha(v/u)+u\,\left[\alpha(v/u)+\alpha'(v/u)\,(-v/u^2)\right]+v\,\alpha'(v/u)\,(1/u)=0
\end{equation}
for all $(u,v)$.

\subsection{Higher order score functions}\label{sc_highorder}

In this section we look at score functions using an arbitrary number of derivatives; that is, we allow $m>2$.
So let $\phi$ now be a convex function on $(m+1)$ dimensions:
$$\phi(u_0,\ldots,u_{m})\geq \phi(v_0,\ldots,v_{m})+\sum_{j=0}^{m}\frac{\partial \phi}{\partial v_j}(u_j-v_j).$$
The Bregman divergence can be written as
$$B_\phi(q,p)=\phi(q_0,q_1,\ldots,q_m)-\phi(p_0,p_1,\ldots,p_m)-\sum_{j=0}^m \frac{\partial \phi}{\partial p_j}(q_j-p_j),$$
where the subscript $j$ indicates the order of differentiation, with, for example, $p_0=p$ and $p_j=d^j p/d x^j$. Also, we have 
$D(p,q)=\int B_\phi(p,q).$
In this Section, to keep a readable notation, we set $p=(p_0,p_1,\ldots,p_m)$ and $q=(q_0,q_1,\ldots,q_m)$.
If we have
\begin{equation}\label{const}
\phi(p)=\sum_{j=0}^m  \frac{\partial \phi}{\partial p_j}p_j,
\end{equation}
and the derivatives disappear at boundary values, using multiple integration by parts we get
$$
\begin{array}{ll}
D(q,p)  &  =\int \phi(p)-\sum_{j=0}^m \int p_j\,\frac{\partial\phi}{\partial q_j} \\ 
& =\int \phi(p)-\int p\,\sum_{j=0}^m (-1)^j \frac{d^j}{d x^j} \frac{\partial\phi}{\partial q_j}.
\end{array} 
$$
Hence, the score function is given by
\begin{equation}\label{mscore}
S(q)=\sum_{j=0}^m (-1)^{j+1}\,\frac{d^j}{d x^j} \frac{\partial\phi}{\partial q_j}.
\end{equation}
Note that, if we have $m=0$ and $\phi(u)=u\log u-u$, we recover the log-score function, that is $S(q)=-\log q$. If we have $m=1$ and $\phi(u,v)=v^2/u$, we recover the Hyv\"{a}rinen score function.

A general form of convex function satisfying the requirement \eqref{const} is
$$\phi(u_0,\ldots,u_{m})=\sum_{j=0}^{m-1} u_j\,\alpha_j(u_{j+1}/u_j),$$
where $(\alpha_0,\alpha_1,\ldots,\alpha_{m-1})$ is a set of convex functions.


\subsection{Connection with \cite{Parry:2012}}\label{ParryApp}

In \ref{sc_divergences}, we have shown that it is possible to derive the class of score functions using convex functions and the Bregman divergence only. Here we show that the properties we have used, imply those used by \cite{Parry:2012}.

First we show that a function $\phi$ satisfying \eqref{condphi} implies that $\phi$ is $1$--homogeneous.

\begin{lemma}\label{lemonehom}
Let $\phi$ be such that $\phi(u,v)=u\partial\phi/\partial u+v\partial \phi/\partial v$. Then, for any $\lambda>0$, it is that
$\phi(\lambda u,\lambda v)=\lambda\,\phi(u,v)$.
\end{lemma}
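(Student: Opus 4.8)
The plan is to recognise this as the converse direction of Euler's identity for positively homogeneous functions of degree one, and to prove it by reducing to a one‑dimensional ordinary differential equation. Fix an arbitrary point $(u,v)$ in the (open) region where $\phi$ is defined and differentiable, and define the scalar function $g(\lambda)=\phi(\lambda u,\lambda v)$ for $\lambda>0$. The goal is to show $g(\lambda)=\lambda\,g(1)$.

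First I would differentiate $g$ using the chain rule, obtaining
$$g'(\lambda)=u\,\frac{\partial\phi}{\partial u}(\lambda u,\lambda v)+v\,\frac{\partial\phi}{\partial v}(\lambda u,\lambda v).$$
Next I would apply the hypothesis $\phi(a,b)=a\,\partial\phi/\partial a+b\,\partial\phi/\partial b$ at the point $(a,b)=(\lambda u,\lambda v)$; multiplying the displayed expression for $g'(\lambda)$ by $\lambda$ gives $\lambda\,g'(\lambda)=\lambda u\,\partial_1\phi(\lambda u,\lambda v)+\lambda v\,\partial_2\phi(\lambda u,\lambda v)=\phi(\lambda u,\lambda v)=g(\lambda)$. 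Hence $g$ satisfies the linear ODE $\lambda\,g'(\lambda)=g(\lambda)$ on $\lambda>0$.

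To finish cleanly without worrying about zeros of $g$, I would introduce $h(\lambda)=g(\lambda)/\lambda$ and compute $h'(\lambda)=\bigl(\lambda g'(\lambda)-g(\lambda)\bigr)/\lambda^2=0$, so $h$ is constant on the connected interval $(0,\infty)$. Evaluating at $\lambda=1$ gives $h(\lambda)\equiv h(1)=g(1)=\phi(u,v)$, that is, $\phi(\lambda u,\lambda v)=g(\lambda)=\lambda\,\phi(u,v)$, which is the claim. Since $(u,v)$ was arbitrary, the identity holds everywhere.

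There is no serious obstacle here: the only points requiring a word of care are that $\phi$ is differentiable (which is in force because $\phi$ is the convex function of \eqref{diva}, smooth away from the degenerate locus) so that the chain rule applies, and that the domain in which we move $(\lambda u,\lambda v)$ — e.g.\ $v/u$ of a fixed sign, as in Lemma \ref{lem_convex} — is a cone, so that $(\lambda u,\lambda v)$ stays in the region of validity for all $\lambda>0$; the argument is then just the elementary ODE step above.
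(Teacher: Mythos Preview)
Your proof is correct and follows essentially the same approach as the paper: both set $T(\lambda)=\phi(\lambda u,\lambda v)$, derive the ODE $T'(\lambda)=T(\lambda)/\lambda$ from the hypothesis via the chain rule, and conclude $T(\lambda)\propto\lambda$. Your version is slightly more explicit in solving the ODE (introducing $h(\lambda)=g(\lambda)/\lambda$ to avoid dividing by a possibly vanishing $g$) and in noting the cone structure of the domain, but the underlying idea is identical.
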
 

\begin{proof}
Let
$T(\lambda)=\phi(\lambda u,\lambda v)$. Thus, $\partial T/\partial\lambda=T(\lambda)/\lambda$, which implies $T(\lambda)\propto \lambda$.
\end{proof}

\noindent
Next we show that the class of score functions \eqref{score} and the condition \eqref{condphi}, imply \eqref{euler}.

\begin{lemma}\label{lemeuler}
If a function $\phi$ satisfies the condition in \eqref{condphi}, and a score $S$ is given by \eqref{score}, then the score $S$ satisfies equation \eqref{euler}. 
\end{lemma}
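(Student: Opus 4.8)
The plan is to plug the closed form \eqref{score}, $S=-\partial\phi/\partial q+\frac{d}{dx}\bigl(\partial\phi/\partial q'\bigr)$, into the left side of \eqref{euler} and show the derivative terms collapse, using only that $\phi=\phi(q,q')$ obeys the homogeneity relation \eqref{condphi} (equivalently, by Lemma~\ref{lemonehom}, that $\phi$ is $1$-homogeneous). Since $\partial\phi/\partial q'$ depends on $(q,q')$ alone, the chain rule makes $S$ a genuine function of the three slots, $S=-\phi_{q}+\phi_{qq'}\,q'+\phi_{q'q'}\,q''$, with subscripts denoting partial derivatives of $\phi$. Hence $\partial S/\partial q''=\phi_{q'q'}$, and a one-line computation---in which the $-\phi_{qq'}$ coming from $-\phi_q$ cancels the $+\phi_{qq'}$ produced by the product rule---gives $\partial S/\partial q'=\phi_{qq'q'}\,q'+\phi_{q'q'q'}\,q''=\frac{d}{dx}\phi_{q'q'}$. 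So the structural identity $\partial S/\partial q'=\frac{d}{dx}(\partial S/\partial q'')$ holds, and this is the crux: substituting it into $\frac{d^{2}}{dx^{2}}(q\,\partial S/\partial q'')$, the two highest-order terms in \eqref{euler} telescope,
\[
-\frac{d}{dx}\!\left(q\,\frac{\partial S}{\partial q'}\right)+\frac{d^{2}}{dx^{2}}\!\left(q\,\frac{\partial S}{\partial q''}\right)=q'\,\frac{\partial S}{\partial q'}+q''\,\frac{\partial S}{\partial q''},
\]
so the derivative part $q\,\partial S/\partial q-\frac{d}{dx}(q\,\partial S/\partial q')+\frac{d^{2}}{dx^{2}}(q\,\partial S/\partial q'')$ of \eqref{euler} equals the Euler operator $q\,\partial S/\partial q+q'\,\partial S/\partial q'+q''\,\partial S/\partial q''$ applied to $S$.

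It then remains to show this Euler operator annihilates $S$, i.e.\ that $S$ is $0$-homogeneous jointly in $(q,q',q'')$. This follows from the homogeneity of $\phi$: differentiating \eqref{condphi} once with respect to $q$ and to $q'$ gives $q\phi_{qq}+q'\phi_{qq'}=0$ and $q\phi_{qq'}+q'\phi_{q'q'}=0$ (so $\phi_q,\phi_{q'}$ are $0$-homogeneous), and differentiating these again gives $q\phi_{qq'q'}+q'\phi_{q'q'q'}=-\phi_{q'q'}$ and its companions (so the second partials of $\phi$ are $(-1)$-homogeneous). Consequently each of $\phi_q$, $\phi_{qq'}\,q'$ and $\phi_{q'q'}\,q''$---and therefore $S$---is $0$-homogeneous in $(q,q',q'')$, and Euler's theorem gives $q\,\partial S/\partial q+q'\,\partial S/\partial q'+q''\,\partial S/\partial q''=0$. (Equivalently, one substitutes the explicit $\partial S/\partial q$, $\partial S/\partial q'$, $\partial S/\partial q''$ into this combination and checks term by term, using the same relations.) Thus the derivative part of \eqref{euler} vanishes identically for every $S$ in the class \eqref{score}, which is what the lemma asserts; if the leading $S$ in \eqref{euler} is kept as written, the same computation shows the left side then equals $S$, so \eqref{euler} amounts to $S=0$---the equation used to construct the prior in Section~\ref{sc_application} and in \cite{LVW:2020}.

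The difficulty is bookkeeping rather than ideas: one must keep the slot-derivatives $\partial/\partial q,\partial/\partial q',\partial/\partial q''$ of $S$ strictly separate from the total derivative $d/dx$ (the latter is where the third derivatives of $\phi$ and the telescoping come from), and track how the homogeneity degree of a partial of $\phi$ drops by one at each differentiation so that the cancellations in the Euler-operator step are exact. A quick sanity check is the Hyv\"{a}rinen case $\alpha(u)=u^{2}$, $\phi(u,v)=v^{2}/u$: there $\partial S/\partial q''=2/q$, so $\frac{d^{2}}{dx^{2}}(q\,\partial S/\partial q'')=0$, $\partial S/\partial q'=-2q'/q^{2}=\frac{d}{dx}(2/q)$, and $q\,\partial S/\partial q-\frac{d}{dx}(q\,\partial S/\partial q')$ vanishes.
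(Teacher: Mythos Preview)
Your argument is correct and uses the same ingredients as the paper's proof: expand $S=-\phi_q+q'\phi_{qq'}+q''\phi_{q'q'}$, compute the three slot-derivatives, and invoke the relations $q\phi_{qq}+q'\phi_{qq'}=0$ and $q\phi_{qq'}+q'\phi_{q'q'}=0$ obtained by differentiating \eqref{condphi}. Where the paper simply lists $\partial S/\partial q$, $\partial S/\partial q'$, $\partial S/\partial q''$ and says ``combining all the above expressions yields \eqref{euler}'', you package the same cancellations more conceptually---first noting the structural identity $\partial S/\partial q'=\tfrac{d}{dx}(\partial S/\partial q'')$ to telescope the two highest-order terms down to $q'\,\partial S/\partial q'+q''\,\partial S/\partial q''$, and then observing that $S$ is $0$-homogeneous in $(q,q',q'')$ (inherited from the $1$-homogeneity of $\phi$) so that Euler's theorem kills the resulting expression. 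This is the same calculation with a cleaner narrative, not a genuinely different method. Your closing remark---that with the leading $S$ in \eqref{euler} kept as written the identity reduces to $S=0$---is correct and consistent with the higher-order statement $LS=0$ in Appendix~\ref{ParryApp}.
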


\begin{proof}
The proof is a simple matter of algebra and calculus. It requires the key observation that
$$\frac{\partial }{\partial q}\frac{d}{dx}=\frac{d}{dx}\frac{\partial}{\partial q}
\quad\quad\mbox{and}\quad\quad
\frac{\partial }{\partial q'}\frac{d}{dx}=\frac{d}{dx}\frac{\partial}{\partial q'}+\frac{\partial}{\partial q}.
$$
Given that $\phi=q\partial \phi/\partial q+q'\partial \phi/\partial q'$ we get
$$q\frac{\partial^2\phi}{\partial q^2}+q'\frac{\partial^2\phi}{\partial q\partial q'}=0\quad\mbox{and}\quad
q\frac{\partial^2\phi}{\partial q\partial q'}+q'\frac{\partial^2\phi}{\partial q'^2}=0.
$$
Since $S=-\partial\phi/\partial q+q'\partial^2\phi/\partial q\partial q'+q''\partial^2\phi/\partial q'^2$, we have
$\partial S/\partial q'' =\partial^2\phi/\partial q'^2$. Further,
$$\frac{\partial S}{\partial q'}=-\frac{\partial^2\phi}{\partial q\partial q'}+\frac{d}{dx}\frac{\partial^2\phi}{\partial q'^2}+\frac{\partial^2\phi}{\partial q\partial q'}\quad
\mbox{and}\quad\frac{\partial S}{\partial q}=-\frac{\partial^2\phi }{\partial q^2}+\frac{d}{dx}\frac{\partial^2\phi}{\partial q\partial q'}.$$
Combining all the above expressions, yields \eqref{euler}.
\end{proof}

\noindent
Finally, in this Appendix, we show that if $\phi$ is $1$--homogeneous, then $\phi$ is convex. The result is quite straightforward, and is achieved by showing that $\partial^2\phi/\partial u^2\,\cdot \partial^2\phi/\partial v^2\geq (\partial^2\phi/\partial u\partial v)^2$.

Here we make the connection with equation (39) in \cite{Parry:2012} and the score function in \eqref{mscore}. The main result is to show that if 
$$\phi(u)=\sum_{j=0}^m u_j\,\frac{\partial \phi}{\partial u_j}\quad \mbox{and}\quad S(u)=\sum_{j=0}^m (-1)^{j+1}\frac{d^j}{dx^j}\frac{\partial \phi}{\partial u_j},$$
then the relevant Euler--Lagrange equation is
$$\sum_{j=0}^{2m}(-1)^j\frac{d^j}{dx^j}\left(u_0\frac{\partial S}{\partial u_j}\right)=0.$$
To this end, define the operators, as in \cite{Parry:2012}
$$E=\sum_{j} u_j\,\frac{\partial}{\partial u_j},\quad \Lambda=\sum_{j} (-1)^{j+1}\frac{d^j}{dx^j}\frac{\partial}{\partial u_j}\quad\mbox{and}\quad L=\sum_{j}(-1)^j\frac{d^j}{dx^j}\left(u_0\frac{\partial}{\partial u_j}\right).$$

\begin{theorem}
If $S=\Lambda \phi$ and $E\phi=\phi$, then $LS=0$.
\end{theorem}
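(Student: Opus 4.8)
The plan is to deduce $LS=0$ from two short operator identities together with the fact that $1$--homogeneity of $\phi$ collapses the Bregman integrand. Write $D=d/dx$ for the total derivative acting on the jet variables $u_0,u_1,\ldots$ (so that $Du_j=u_{j+1}$), and let $p_j$ abbreviate $D^jp$. The relation I would use repeatedly is the commutation rule
$$\frac{\partial}{\partial u_j}\,D \;=\; D\,\frac{\partial}{\partial u_j}\;+\;\frac{\partial}{\partial u_{j-1}},$$
with the convention $\partial/\partial u_{-1}=0$; this is precisely the higher--order version of the two identities invoked in the proof of Lemma~\ref{lemeuler} (take $q=u_0$, $q'=u_1$).

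First I would establish two facts. (i) $\Lambda D=0$ as an operator on jet functions: feeding the commutation rule into $\partial_{u_j}(Df)$ and shifting the index in one of the two resulting sums, the sums cancel term by term — the familiar statement that total $x$--derivatives are null Lagrangians. (ii) For every jet function $f$, $\;Lf=-f-\Lambda(u_0 f)$. This is immediate from $\partial u_0/\partial u_j=\delta_{j0}$: one computes $\Lambda(u_0f)=\sum_j(-1)^{j+1}D^j\big(u_0\,\partial_{u_j}f+\delta_{j0}f\big)=-f-\sum_j(-1)^jD^j\big(u_0\,\partial_{u_j}f\big)=-f-Lf$.

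Next I would show that the two hypotheses force a pointwise jet identity of the form $\phi+u_0S=D(R)$ for some jet function $R$. Expanding the $(m{+}1)$--dimensional Bregman divergence of \ref{sc_highorder} (the analogue of \eqref{breg2dim}) and using $E\phi=\phi$, i.e.\ \eqref{const}, in the form $\sum_j q_j\,\partial_{q_j}\phi=\phi(q)$, the integrand simplifies to $B_\phi(p,q)=\phi(p)-\sum_j p_j\,\partial_{q_j}\phi$. Applying the Lagrange (repeated integration--by--parts) identity $p_j c-(-1)^jp\,D^jc=D(\cdot)$ to each summand $p_j\,\partial_{q_j}\phi$ and summing over $j$ gives $\sum_j p_j\,\partial_{q_j}\phi=-p\,S(q)+D(R_p)$, where $S=\Lambda\phi$ and $R_p$ is the explicit bilinear concomitant produced by the integrations by parts; hence $B_\phi(p,q)=\phi(p)+p\,S(q)-D(R_p)$. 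Since $B_\phi$ vanishes identically when its two arguments coincide, setting $p_j=q_j=u_j$ yields $\phi+u_0S=D(R)$ with $R=R_p|_{p=q}$.

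Finally I would combine the pieces. By (ii) with $f=S$, and $S=\Lambda\phi$,
$$LS\;=\;-S-\Lambda(u_0S)\;=\;-\Lambda\phi-\Lambda(u_0S)\;=\;-\Lambda\big(\phi+u_0S\big)\;=\;-\Lambda\big(D(R)\big)\;=\;0,$$
the last equality by (i). I expect the main obstacle to be the third paragraph: one must run the integration by parts as an exact jet--level identity — keeping the total--derivative remainder $R_p$ explicit via the Lagrange identity rather than discarding ``boundary terms'' — and one must invoke $E\phi=\phi$ at just the right point to eliminate the $\phi(q)$ term from $B_\phi$. As a by--product, the same commutation rule gives $[\Lambda,E]=\Lambda$, whence $ES=E\Lambda\phi=0$: the resulting score is automatically $0$--homogeneous, a useful consistency check on the construction.
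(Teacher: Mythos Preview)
Your argument is correct and takes a genuinely different route from the paper's. The paper works purely with operator commutators borrowed from \cite{Parry:2012}: from $\Lambda E=E\Lambda+\Lambda$ and $E\phi=\phi$ it gets $L\Lambda\phi=L\Lambda E\phi=LE\Lambda\phi+L\Lambda\phi$, hence $LE\Lambda\phi=0$; then $EL=LE$ turns this into $E(LS)=0$, and the paper finishes by asserting that $E\psi=0$ forces $\psi=0$. You instead replace the commutator machinery by the two explicit identities $\Lambda D=0$ and $Lf=-f-\Lambda(u_0f)$, and you use homogeneity pointwise rather than at the operator level: the Lagrange integration--by--parts identity applied to $\sum_j u_j\,\partial_{u_j}\phi$ gives the exact jet relation $\phi+u_0S=D(R)$, from which $LS=-\Lambda(\phi+u_0S)=-\Lambda D(R)=0$ is immediate. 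What each approach buys: the paper's route is shorter but leans on the operator calculus of \cite{Parry:2012} and ends with the step ``$E\psi=0\Rightarrow\psi=0$'', which as written needs more (any $0$--homogeneous jet function, say $u_1/u_0$, is annihilated by $E$); your route sidesteps that issue entirely and, as a bonus, the identity $\phi+u_0S=D(R)$ is precisely the decomposition \eqref{divinfsc} made exact at the jet level before integration, which has independent interest. Your side observation $ES=0$ is the paper's identity $\Lambda E=E\Lambda+\Lambda$ applied to $\phi$, so that ingredient is common to both arguments.
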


\begin{proof}
To proof the theorem, we consider the properties of differential operations, as discussedi in Section 5 of \cite{Parry:2012}. In particular, the properties $\Lambda \,E=E\Lambda+\Lambda$ and $E\,L=L\,E$.

\noindent
Hence, $L\Lambda \phi=L\,\Lambda\,E\phi=L(E\Lambda+\Lambda)\phi$, which implies that $L\,E\,\Lambda\phi=0$.
This in turn implies $E\,L\Lambda\phi=0$; i.e. $E\,LS=0$. Finally, it is easy to see that if $E\psi=0$ then $\psi=0$.
\end{proof}


\end{document}